\documentclass[runningheads]{article}
\usepackage[T1]{fontenc}
\usepackage{complexity}
\usepackage{comment}
\usepackage{todonotes}
\usepackage{xspace}
\usepackage[hidelinks]{hyperref}
\usepackage{enumitem}
\usepackage{subcaption}
\usepackage{amsmath}
\usepackage{stmaryrd}
\usepackage{amssymb}
\usepackage{amsthm}
\usepackage{authblk}

\usepackage{graphicx}
\usepackage{color}

\newcommand{\strat}{\mathcal{S}}
\newcommand{\pos}{\mathcal{P}}
\newcommand{\score}{\textup{sc}}
\newcommand{\game}{\mathcal{G}}
\newcommand{\ogame}{\mathbb{G}}
\newcommand{\cA}{\mathcal{A}}
\newcommand{\cB}{\mathcal{B}}
\newcommand{\cE}{\mathcal{E}}
\newcommand{\cH}{\mathcal{H}}
\newcommand{\cI}{\mathcal{I}}
\newcommand{\cV}{\mathcal{V}}
\newcommand{\cX}{\mathcal{X}}
\newcommand{\interval}[2]{\left\llbracket #1,#2 \right\rrbracket}

\newtheorem{theorem}{Theorem}
\newtheorem{lemma}[theorem]{Lemma}

\newtheorem{corollary}[theorem]{Corollary}

\newcommand{\draft}{\textsc{DraftGame}\xspace}
\newcommand{\QBF}{\lang{QBF}\xspace}
\newcommand{\TQBFT}{$3$-\lang{QBF}-$3$\xspace}

\newcommand{\quickset}[1]{\left\lbrace #1 \right\rbrace}
\usepackage{stmaryrd}

\begin{document}
\title{A two-player version of the assignment problem}
%
%

%


\author[1]{Florian Galliot}
\author[2]{Nacim Oijid}
\author[3]{Jonas Sénizergues}

\affil[1]{Aix-Marseille Université, CNRS, I2M, UMR 7373, Marseille, France}

\affil[2]{Umeå University, Umeå, Sweden}

\affil[3]{LaBRI, Université de Bordeaux, Bordeaux, France}

\maketitle              

\begin{abstract}

We introduce the competitive assignment problem, a two-player version of the well-known assignment problem. Given a set of tasks and a set of agents with different efficiencies for different tasks, Alice and Bob take turns picking agents one by one. Once all agents have been picked, Alice and Bob compute the optimal values $s_A$ and $s_B$ for the assignment problem on their respective sets of agents, i.e. they assign their own agents to tasks (with at most one agent per task and at most one task per agent) so as to maximize the sum of the efficiencies. The score of the game is then defined as $s_A-s_B$. Alice aims at maximizing the score, while Bob aims at minimizing it. This problem  can model drafts in sports and card games, or more generally situations where two entities fight for the same resources and then use them to compete against each other. We show that the problem is {\sf PSPACE}-complete, even restricted to agents that have at most two nonzero efficiencies. On the other hand, in the case of agents having at most one nonzero efficiency, the problem lies in {\sf XP} parameterized by the number of tasks, and the optimal score can be computed in linear time when there are only two tasks.

\end{abstract}

\section{Introduction}\strut
\indent\textbf{The assignment problem.} The {\em assignment problem} is a fundamental problem in combinatorial optimization about assigning {\em agents} to {\em tasks}. There must be at most one task per agent and at most one agent per task. An agent $i$ which is qualified for a task $j$ would perform that task with a certain efficiency $e(i,j)$. The goal is to perform as many tasks as possible while maximizing the sum of the respective efficiencies. This problem can be formulated as finding a maximum-weight maximum-cardinality matching in a bipartite graph, with the agents on one side and the tasks on the other. Note that this problem is equivalent to the minimum-weight maximum-cardinality problem in a bipartite graph, up to negating all the weights (and adding some constant if one wants nonnegative weights). As such, the assignment problem can just as well be seen as a cost-minimization problem. 
The first nontrivial algorithm solving the assignment problem may be Easterfield's from 1946, which runs in exponential time \cite{easterfield}. In a highly influential paper from 1955, Kuhn describes what is known as the {\em Hungarian algorithm}, which runs in $O(n^4)$ time where $n$ is the number of vertices of the bipartite graph \cite{kuhn}. Later modifications have brought the running time down to $O(n^3)$ \cite{edmonds-karp,tomizawa}, which remains the best known to this day in all generality.

The assignment problem has applications to many real-world situations, where people and/or machines are appointed different roles on which they work simultaneously and towards the same optimization goal. One can think of a manager giving her employees tasks that best fit their respective skills, a sports coach assembling the best possible lineup from the squad at his disposal, or the chair of a conference program committee assigning papers to reviewers according to their areas of expertise \cite{assignment_reviews}. It also applies, as a time-minimization problem, to any scheduling instance with totally ordered tasks and single-use agents to perform them. Other concrete applications of the assignment problem include assigning patients to hospital beds to maximize the provision of healthcare services \cite{assignment_hospital}, or matchmaking clients to drivers in e-hailing apps to minimize the total distance covered \cite{assignment_examples}.

Variants of the assignment problem have been studied over the time to account for the fact that real-world situations often require assigning tasks to agents in an imperfect environment. In particular, optimal strategies have been found for the {\em online} version of the problem, where the graph is not known at the start but instead the tasks (or the agents, in symmetrical fashion) arrive one by one in random order and must be assigned instantly without knowing what is coming next \cite{fahrbach,karp,kesselheim,korula}. Another variant, where conflicts must be avoided between some agent-task pairs, has been shown to be {\sf NP}-hard but tractable on a large class of realistic instances \cite{conflicts}. Finally, if an attacker waits until the assignment is done to attack some of the agents and reduce their performance, then we again get an {\sf NP}-hard problem \cite{adversarial}. That last variant can be seen as a single-round game between the assigner and the attacker and, as far as we know, constitutes the only prior example of a two-player version of the assignment problem.

\medskip

\indent\textbf{Our contribution.} We introduce the {\em competitive assignment problem}, which models situations where two entities need the same tasks to be performed for their own and are competing for the same agents. This can be seen as a multiple-round two-player game between, say, Alice and Bob. The players take turns picking agents one by one, with Alice going first. Once all the agents have been picked, Alice and Bob each solve the assignment problem on their own team of agents, with optimal values $s_A$ and $s_B$ respectively. The {\em score} of the game is defined as $s_A-s_B$. Alice aims at maximizing the score, while Bob aims at minimizing it. This is effectively a {\em draft} process, similar to what is used in many sports leagues (such as the NBA) to allocate players to teams, albeit involving two teams only. We thus propose, to our knowledge, the first mathematical model and study of optimal strategies for drafts, apart from the specific case of multiplayer online battle arena games \cite{MOBA2,MOBA1} in which the optimized function is very different.

Our definition of the game, and of the score in particular, is suited to the case where the two teams will compete against each other after the selection is done: as such, the goal is not to have the best team in absolute terms, but instead to have the biggest possible edge over the opponent. This applies to fantasy sports as well as various card games which involve a draft phase. Additionally, real-world applications include competitive environments where there is a need to identify which resources should be secured urgently before rivals snatch them away. An example would be two companies in the same industry and with the same customer market, trying to prioritize as to which jobs to fill and which partnerships to strike.

The competitive assignment problem closely relates to what we may call the {\em fair 2-assignment problem}, where one must build two agent-disjoint assignments, each assigning all the tasks, such that the two total weights are as equal as possible. Such problems may arise, for instance, when a unit gets divided into two departments so that its human and material resources must be split between the two. It can easily be seen that the fair 2-assignment problem generalizes the well-known {\em balanced 2-partition} problem and, as such, is {\sf NP}-complete \cite{gareyjohnson}. The competitive assignment problem provides one way to approach it, by getting two imaginary players to fight for the agents, with the intuition that the final split should often be quite equal. By designing strategies for the player at a disadvantage (this will be Bob), we at least get bounds on the optimal balance. More generally, we believe that scoring games are natural and useful companions for optimization problems about partitioning a set into two parts which must be balanced in some way. For instance, consider the problem of partitioning the vertices of a graph into two equal parts while minimizing the difference between the number of edges they induce. A bound for this nontrivial problem is given by the optimal score of its associated two-player scoring game (i.e. the players draft the vertices and the score is the number of edges induced by Alice's vertices minus those induced by Bob's), which can be computed in polynomial time \cite{incidence}.

\medskip

\indent\textbf{Overview of the results.} A few general properties of the competitive assignment problem are listed in Section \ref{section2}. Some of them follow from its membership in two classes of combinatorial games: Milnor's universe, and Maker-Breaker positional games. 
We also establish some basic strategic principles, which narrow down the candidate moves in some situations. The next two sections investigate the algorithmic complexity of the associated decision problem, which asks whether the optimal score is greater or equal to some integer $s$. In Section \ref{section3}, we show that this problem is {\sf PSPACE}-complete, even restricted to agents that have at most two nonzero efficiencies. Section \ref{section4} thus addresses agents that have at most one nonzero efficiency. We show that this case is in {\sf XP} parameterized by the number of tasks, and we get a linear-time algorithm when there are only two tasks. Section \ref{section5} concludes the paper and lists some perspectives for future research.

\section{Framework and first general results}\label{section2}

\subsection{The draft game}

From here on, the competitive assignment problem will be referred to as the {\em draft game} for short. An instance $\game$ of the draft game is a set of $n$ {\em agents}, each of which is a $t$-dimensional vector of real numbers, where each dimension represents a {\em task} (in particular, $t$ is the number of tasks). Alice and Bob take turns picking agents one by one, with Alice going first, until the set of agents has been partitioned into the set $\cA$ of Alice's agents and the set $\cB$ of Bob's agents. Denoting by $X^{(i)}$ the $i$-th coordinate of an agent $X$, we define the {\em score} of the game as:
$$ \max_{\substack{ a : [t] \to \cA \\ \text{$a$ injective} }} \sum_{i=1}^{t} a(i)^{(i)} - \max_{\substack{ b : [t] \to \cB \\ \text{$b$ injective} }} \sum_{i=1}^{t} b(i)^{(i)}, $$
which can be computed in polynomial time given $\cA$ and $\cB$ by solving two separate instances of the (classical) assignment problem. The {\em optimal score} of $\game$, denoted by $\score(\game)$, is the score obtained when Alice and Bob apply optimal strategies, given that Alice wants to maximize the score while Bob wants to minimize it. 

A {\em position} of the draft game is a triple $\pos=(\game,\cA,\cB)$, where $\game$ is an instance of the draft game and $\cA,\cB$ are disjoint subsets of $\game$ which we interpret as the sets of agents that have already been picked by Alice and Bob respectively. When considering instances $\game$, we will assume that Alice plays first, whereas when considering positions $\pos$, we will specify who is next to play. Note that an instance $\game$ of the draft game could be seen as a starting position $(\game,\varnothing,\varnothing)$.

Since the game is largely unchanged under adding the same constant to every efficiency, we will assume throughout the paper that the agents are vectors of nonnegative numbers.

Note that, in our model, any agent can perform any task (with efficiency zero at worst), unassigned tasks are allowed and count for zero, and agents are assigned tasks at the very end rather than at the moment they are picked. These choices will be discussed in the conclusion. Also note that, since two agents could have the same efficiencies for all tasks, an instance $\game$ should technically be defined as a multiset, but we opt not to in order to alleviate notations.

As we will be particularly interested in agents that have few nonzero efficiencies, we introduce the following definitions. An {\em OTP} (which stands for {\em one-trick poney}) is an agent that has at most one nonzero efficiency. Similarly, a {\em TTP} is an agent that has at most two nonzero efficiencies.

Following standard practice for optimization problems, we define an associated decision problem \draft which, given an instance $\game$ of the draft game and an integer $s$, asks whether $\score(\game) \geq s$. In other words, ``does Alice have a strategy ensuring a score of at least $s$ against any strategy of Bob?''

\subsection{Membership in Milnor's universe}

Milnor's universe~\cite{milnor1953} comprises of scoring games that are both {\em dicotic}, i.e., from any given position of the game, Alice has a legal move as the next player if and only if Bob has a legal move as the next player; and {\em nonzugzwang}, i.e., from any given position of the game, the optimal score with Alice as the next player is greater or equal to the optimal score with Bob as the next player.

\begin{lemma} \label{milnor}
        The draft game belongs to Milnor's universe.
\end{lemma}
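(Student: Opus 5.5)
Two properties must be checked: dicotic and nonzugzwang. Dicoticity is immediate. In any position $\pos=(\game,\cA,\cB)$, a legal move for either player is to pick an agent of $\game\setminus(\cA\cup\cB)$, so Alice has a legal move exactly when Bob does, namely when some agent is still unpicked.

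For the nonzugzwang property, I will use a strategy-stealing argument based on monotonicity of the assignment value. The key observation is that the function $f(\cX)=\max_{a:[t]\to\cX \text{ injective}}\sum_i a(i)^{(i)}$ is nondecreasing under inclusion. Since efficiencies are nonnegative, an assignment available with $\cX$ remains available with any $\cX'\supseteq\cX$. (The injectivity requirement when $|\cX|<t$ is handled by the convention that unassigned tasks count for zero.) Hence the final score $f(\cA)-f(\cB)$ is nondecreasing in Alice's set and nonincreasing in Bob's set. In words, owning an extra agent never hurts, and the opponent owning an extra agent never helps.

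I will then prove the stronger claim that, for every position $\pos$, the optimal score with Alice to move is at least the optimal score with Bob to move. Rather than run an induction on the number of remaining agents, I will compare the two games directly by strategy stealing. Let $\sigma$ be an optimal strategy for Alice in $\pos$ with Bob to move, guaranteeing some value $v$. In $\pos$ with Alice to move, Alice first picks an arbitrary agent $x$ and then pretends $x$ is still free, playing $\sigma$ in the imagined game where Bob moved first.

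Whenever $\sigma$ tells her to pick an agent she already owns, necessarily the phantom agent $x$, she instead picks any free agent and treats that one as the new phantom. If at the end no agent remains for some required imagined move, Alice simply owns the phantom, which is consistent with the imagined game. Throughout this play, Bob's actual moves are legal moves in the imagined game, since each is a non-phantom free agent.

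The main obstacle is verifying the final bookkeeping, especially the parity at the end. In the real game Alice's set is $\cA_{\text{real}}\supseteq\cA_{\text{imag}}$ and Bob's set satisfies $\cB_{\text{real}}\subseteq\cB_{\text{imag}}$. The number of agents is the same in both games, but the move order differs by one. I must check that when the real game ends the imagined game is also complete, or that completing it only adds agents to Bob's imagined set. I plan to resolve this by assigning every leftover free agent in the imagined game to whichever side the imagined play dictates, and noting that all real agents are accounted for. Alice's real holdings then contain her imagined holdings, and Bob's real holdings are contained in his imagined holdings. Monotonicity then gives real score $\geq$ imagined score $\geq v$, which proves the nonzugzwang property.
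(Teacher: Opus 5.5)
Your proposal is correct and follows the paper's proof. Dicoticity is immediate. Nonzugzwang follows from the same strategy-stealing argument: Alice claims an arbitrary agent, plays her second-player strategy treating it as a phantom, and designates a new phantom whenever the strategy asks for the old one; monotonicity of the assignment value under inclusion then finishes the argument. Your end-of-game bookkeeping fills in a step the paper leaves implicit: exactly the phantom remains free in the imagined play, and $\mathcal{A}_{\text{imag}} \subseteq \mathcal{A}_{\text{real}}$ and $\mathcal{B}_{\text{real}} \subseteq \mathcal{B}_{\text{imag}}$ hold whichever player takes it.
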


\begin{proof}
Any available agent can be picked by any player, therefore the game is dicotic. We now prove that it is nonzugzwang.

    Let $\pos$ be a position of the draft game. Suppose that, going second on $\pos$, Alice has a strategy $\strat$ which ensures a score greater or equal to $s$. We show that she can also ensure a score greater or equal to $s$ when going first. The proof is a classical {\em strategy-stealing} argument. Going first, Alice claims an arbitrary agent $X_0$. Since Bob is now next to play, Alice can pretend like she did not claim $X_0$ and play according to $\strat$ from here on, as follows:
\begin{itemize}[noitemsep,nolistsep]
    \item If $\strat$ dictates to claim some agent $X \neq X_0$, then Alice claims $X$.
    \item If $\strat$ dictates to claim $X_0$, then Alice claims another arbitrary agent $X$ instead, and continues to follow $\strat$ as if $X$ was the arbitrary agent that she had claimed on her first turn.
\end{itemize}

Note that following this strategy is always possible, since every one of Bob's moves during the game could also have been made had Alice not claimed $X_0$. At the end of the game, the set of agents claimed by Alice contains all the agents she would have claimed going second, and symmetrically the set of agents claimed by Bob is included in the set of agents he would have claimed going first. Thus, since the assignment problem is monotone under taking subgraphs, the score will be greater or equal to $s$.
 \end{proof}

\begin{corollary}\label{corollary:score-positif}
    For any instance $\game$ of the draft game, we have $\score(\game) \geq 0$.
\end{corollary}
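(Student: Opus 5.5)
The plan is to derive the corollary from Lemma~\ref{milnor} by a symmetry argument. We compare the instance $\game$ (with Alice going first) against the same instance in which the roles of the two players are swapped.

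First, we introduce notation for a position played from either side. For a position $\pos$, write $\score_A(\pos)$ for the optimal score when Alice is next to play and $\score_B(\pos)$ for the optimal score when Bob is next to play. Then $\score(\game)=\score_A(\game,\varnothing,\varnothing)$, and Lemma~\ref{milnor} (the nonzugzwang property) applied to the starting position gives
$$\score_A(\game,\varnothing,\varnothing)\geq \score_B(\game,\varnothing,\varnothing).$$

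Second, we show that $\score_B(\game,\varnothing,\varnothing)=-\score_A(\game,\varnothing,\varnothing)$. The game is symmetric: both players may pick any available agent, and the score is the antisymmetric quantity $s_A-s_B$. Swapping the names of Alice and Bob therefore transforms the game where Bob moves first into the game where Alice moves first, with the score negated. Concretely, any strategy of Bob going first that guarantees a score of at most $-s$ can be copied by Alice going first, and it then guarantees a score of at least $s$; the converse also holds. This can be checked by a straightforward induction on the number of remaining agents, showing that for every position $(\game,\cA,\cB)$ we have $\score_B(\game,\cA,\cB)=-\score_A(\game,\cB,\cA)$. At the terminal positions this is immediate from the definition of the score. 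The inductive step is a max/min exchange: $\min_X(-f(X))=-\max_X f(X)$.

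Combining the two steps gives $\score(\game)\geq -\score(\game)$, hence $\score(\game)\geq 0$. No step is a real obstacle; the only care needed is stating the role-swap symmetry precisely on general positions so that the induction goes through.
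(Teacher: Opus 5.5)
Your proposal is correct and follows the paper's argument: Lemma~\ref{milnor} (nonzugzwang) gives $\score(\game)$ at least the Bob-first value, and the role-swap symmetry shows that this value equals $-\score(\game)$. Your induction on positions, $\score_B(\game,\cA,\cB)=-\score_A(\game,\cB,\cA)$, simply makes precise the paper's one-line remark that a strategy for one player can be used by the other when the order of play is swapped.
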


\begin{proof}
From an instance (i.e. a starting position) $\game$ of the draft game, any strategy for a player can be used by the other player if we change the order of play. Therefore, since the optimal score when Alice is the first player is $\score(\game)$ by definition, the optimal score when Alice is the second player is $-\score(\game)$. As the draft game is nonzugzwang by Lemma \ref{milnor}, it follows that $ \score(\game) \ge -\score(\game)$ hence $\score(\game) \ge 0$.
 \end{proof}

Milnor's universe allows for defining powerful tools to study scoring games: we refer the interested reader to Stewart's thesis~\cite{stewart2012} and Duchêne et al.~\cite{Duchene2024}. One of them is the notion of {\em mean}, introduced by Hanner in 1959~\cite{hanner1959}, which predates the current formalism and standard notations in scoring game theory. This notion relies on the definition of a {\em sum} of games.

The sum of two games $\ogame_1$ and $\ogame_2$, denoted by $\ogame_1 + \ogame_2$, is the game in which Alice and Bob have the choice to move either in $\ogame_1$ or in $\ogame_2$ and the score of the game is the sum of the scores in $\ogame_1$ and in $\ogame_2$. Milnor's universe is an abelian group with regard to this sum operator~\cite{milnor1953}.
We denote by $n \ogame$ the game obtained by summing $n$ copies of the game $\ogame$.

Intuitively, the mean of a game $\ogame$ is the average score $m(\ogame)$ that could be expected if the players have to play on a large number of copies of $\ogame$. Formally:

$$ m(\ogame) = \underset{n\to \infty}{\lim} \dfrac{\score(n\ogame)}{n},$$
which is well defined since this limit always exists in Milnor's universe~\cite{hanner1959}.

In our context of the draft game, playing on a sum of instances (or a sum of positions, more generally) means considering agents that are all drafted together but work on separate sets of tasks, where the final score is the sum of the scores obtained on each set of tasks. Even though the mean can be nonzero for some positions of the draft game, the next result shows that it is always zero for starting positions, indicating that the draft game is somehow balanced.

\begin{lemma}\label{lemma:mean-zero}
    For any instance $\game$ of the draft game, we have $m(\game) = 0$.
\end{lemma}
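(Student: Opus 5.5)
We need to show $\score(n\game)/n \to 0$. By Corollary~\ref{corollary:score-positif} style reasoning we want a lower bound and an upper bound. The lower bound $m(\game)\ge 0$ would follow from $\score(n\game)\ge 0$. This holds if $n\game$ is itself an instance of the draft game, so that Corollary~\ref{corollary:score-positif} applies directly. And it is one: take $nt$ tasks, and embed each copy's agents as vectors supported on that copy's block of $t$ coordinates. An optimal assignment then decomposes blockwise, since assigning an agent outside its block yields zero. So the score of the big instance equals the sum of the scores of the copies.

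The main work is the upper bound $\score(n\game)\le C$ for some constant $C$ independent of $n$. It suffices to have $\score(n\game)=o(n)$, but a bounded value is more natural. Note that $2\game$ and more generally $2k\game$ is symmetric. Bob could try a mirroring strategy pairing copies $2i-1$ and $2i$: whenever Alice picks agent $X$ in one copy of a pair, Bob picks the corresponding agent $X$ in the twin copy. Since Alice moves first and there are no other moves, the twin agent is always still available. This is an invariant to check: after each of Bob's replies, the picked sets in each pair are mirror images, with Alice's set in copy $2i-1$ equal to Bob's set in copy $2i$ and vice versa. So when Alice picks an agent in a pair, its twin is free, because by the invariant the twin is picked iff the original is.

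At the end, the score contributed by a pair is $(s_A^{(1)} - s_B^{(1)}) + (s_A^{(2)} - s_B^{(2)})$, where $s_A^{(2)} = s_B^{(1)}$ and $s_B^{(2)}=s_A^{(1)}$. This gives exactly $0$. Hence $\score(2k\game)\le 0$, and combined with the lower bound, $\score(2k\game)=0$.

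For odd multiples, handle $(2k+1)\game = 2k\game + \game$. Bob mirrors on the paired copies as above. On the leftover copy he plays any strategy: if Alice plays in the leftover copy, Bob answers there if an agent remains; otherwise he plays an arbitrary move and treats it via the same "extra move" bookkeeping used in the strategy-stealing of Lemma~\ref{milnor}. The extra move only helps Bob by monotonicity. The leftover copy contributes at most $M$, the maximum possible score of a single copy, e.g. $t$ times the maximum efficiency. Thus $0\le \score(n\game)\le M$ for all $n$, so the limit is $0$.

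Alternatively, since the limit is known to exist, it suffices to compute it along the subsequence $n=2k$. There $\score(2k\game)/(2k)=0$, which avoids the odd case entirely. I would use this shortcut, noting that the main care point is verifying the mirroring invariant and the blockwise decomposition of the assignment.
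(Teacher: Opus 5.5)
Your proof is correct and is essentially the paper's: nonnegativity from Corollary~\ref{corollary:score-positif}, Bob mirroring Alice's picks between paired copies so that $\score(2k\game)=0$, and passing to the even subsequence using the existence of the limit. Your two additions are sound. The blockwise embedding explicitly justifies that $n\game$ is itself an instance, which the paper uses implicitly. The odd-case bound $0\le\score(n\game)\le M$, obtained with strategy-stealing bookkeeping, would even make the argument independent of Hanner's convergence theorem.
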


\begin{proof}
    Let $u_n = \frac{\score(n\game)}{n}$. We show that $u_{2n}=0$ for all $n \geq 1$. Since $(u_n)_n$ converges by~\cite[Theorem 4]{hanner1959}, this is enough to prove the lemma.

    We know that $\score(2n\game) \geq 0$ by Corollary \ref{corollary:score-positif}. Therefore, it suffices to show that Bob has a strategy ensuring a score of 0 on $2n\game$. Let $\game_1,\ldots,\game_{2n}$ be identical copies of $\game$, which we pair as $(\game_1, \game_2), (\game_3,\game_4), \dots, (\game_{2n-1}, \game_{2n})$. For all $k \in\interval{1}{2n}$, define $k'$ as $k' = k+1$ if $k$ is odd or $k' = k-1$ if $k$ is even. Write $\game_k=\{X_1^k,\ldots,X_m^k\}$, so that the agents $X_i^1,\ldots,X_i^{2n}$ are identical for all $i \in\interval{1}{m}$. Whenever Alice picks an agent $X_i^k$, Bob picks the agent $X_i^{k'}$. Once all the agents are picked, for all $k \in \interval{1}{2n}$, the agents picked by Alice in $\game_k$ are exactly the agents picked by Bob in $\game_{k'}$. Therefore, the score is 0 on each pair, so it is 0 across the entire sum. 
\end{proof}

\subsection{The draft game as a Maker-Breaker positional game}

Maker-Breaker games were introduced by Erd\H{o}s and Chv\'atal in 1978 as natural two-player versions of some central problems in extremal combinatorics \cite{chvatalerdos}. The board is a hypergraph $\cH=(\cV,\cE)$, of which Maker and Breaker take turns claiming vertices, with Maker going first. Maker wins if and only if she manages to claim a whole hyperedge. Given an instance $\cI$ of \draft i.e. a set $\game$ of $n$ agents and a threshold score $s$, we can define a hypergraph $\cH_s=(\cV,\cE_s)$ as follows. The vertices are simply the agents. The hyperedges are all the sets $\cX \subseteq \cV$ of $\lceil\frac{n}{2}\rceil$ agents such that $\textup{opt}(\cX)-\textup{opt}(\cV \setminus \cX) \geq s$, where $\textup{opt}(\cX)$ returns the optimal value for the classical assignment problem restricted on the set of agents $\cX$. By definition of $\cH_s$, Maker has a winning strategy for the Maker-Breaker game on $\cH_s$ if and only if $\cI$ is a yes-instance.

Since the size of $\cH_s$ is exponential in the size of $\cI$, this remark is of no help with regard to the algorithmic complexity of the draft game. General Maker-Breaker games have long been known to be {\sf PSPACE}-complete anyway \cite{schaefer}. However, when it comes to strategic principles, the draft game inherits some general properties of Maker-Breaker games. One such example is actually its nonzugzwang nature, already stated in Lemma \ref{milnor}. Another property of Maker-Breaker games is that, if two vertices $u$ and $v$ are such that every hyperedge containing $v$ also contains $u$, then $u$ is at least as good a move as $v$. The next lemma formulates this statement in the context of the draft game.

\begin{lemma} \label{lemma dominating agents}
    Let $\pos$ be a position of the draft game, and let $X = (x_1, \dots, x_t)$, $Y = (y_1, \dots, y_t)$ be two free agents such that $ x_i \geq y_i$ for all $i \in \interval{1}{t}$. Let $s$ be an integer. If Alice is the next player and can ensure a score of at least $s$ by picking $Y$, then she also can by picking $X$. Similarly, if Bob is the next player and can ensure a score of at most $s$ by picking $Y$, then he also can by picking $X$. Should this happen, we say that $X$ {\em dominates} $Y$ or that $Y$ {\em is dominated by} $X$. 
\end{lemma}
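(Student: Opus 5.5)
We argue by strategy stealing with a swap of the roles of $X$ and $Y$, in the spirit of the proof of Lemma \ref{milnor}, combined with a monotonicity property of the assignment problem under replacing an agent by a dominating one. Throughout we only treat Alice's case; Bob's case is symmetric, with the inequalities reversed.

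The first step is a monotonicity fact. Let $\cX$ be a set of agents containing $Y$ but not $X$. Then $\textup{opt}((\cX \setminus \{Y\}) \cup \{X\}) \geq \textup{opt}(\cX)$. Indeed, take an optimal injective $a : [t] \to \cX$. Replace $Y$ by $X$ in it, so that $X$ performs the task $Y$ was assigned, if any. The new assignment is still injective, and no efficiency decreases because $x_i \geq y_i$ for all $i$. Symmetrically, if $\cX$ contains $X$ but not $Y$, then replacing $X$ by $Y$ does not increase $\textup{opt}$.

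The second step describes Alice's strategy. Suppose that, after picking $Y$, Alice has a strategy $\strat$ ensuring a score of at least $s$. After picking $X$ instead, she plays as follows. She maintains an imaginary game in which the labels of $X$ and $Y$ are exchanged; initially she pretends she picked $Y$. Whenever Bob picks an agent $Z$, she feeds $\strat$ the move $Z$, except that $Y$ is fed as $X$. Whenever $\strat$ answers with some agent, she picks it, except that an answer of $X$ is played as $Y$.

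This translation is always legal. It is exactly the bijection on agents that swaps $X$ and $Y$ and fixes all others. Applied to the real position, it yields a legal position of the imaginary game. Hence legal moves correspond to legal moves at every step, and both games end at the same time.

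The third step compares the final outcomes. At the end, the real partition $(\cA,\cB)$ is obtained from the imaginary partition $(\cA',\cB')$ by swapping $X$ and $Y$. In the imaginary game $Y \in \cA'$, so in reality $X \in \cA$. If $X \in \cB'$, then in reality $Y \in \cB$. In that case $\cA$ is $\cA'$ with $Y$ replaced by $X$, and $\cB$ is $\cB'$ with $X$ replaced by $Y$. By the first step, Alice's value does not decrease and Bob's value does not increase. So the real score is at least the imaginary score, which is at least $s$. If instead $X \in \cA'$, then Alice has both agents in both games, the partitions coincide, and the scores are equal.

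For Bob, the same swap argument shows that his real score is at most his imaginary one. We expect no real obstacle here. The only delicate point is checking that the swapped strategy is always legal, and that is handled by the bijection observation in the second step.
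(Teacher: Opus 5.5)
Your proposal is correct and matches the paper's proof. The paper also has Alice pick $X$ and then run $\strat$ with $X$ and $Y$ exchanged (a dictated $X$ is played as $Y$, and Bob's $Y$ is read as $X$), then concludes from monotonicity of the assignment value when an agent is replaced by one that is at least as good on every task; your swap-bijection phrasing is simply a tidier packaging of the same case analysis.
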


\begin{proof}
This is a classical proof of ``dominated move'' in scoring game theory. By symmetry of the argument, assume that Alice is the next player. Let $\strat$ be a strategy for Alice that starts by picking $Y$ and ensures a score of at least $s$. Consider the following strategy for Alice:

\begin{itemize}[noitemsep,nolistsep]
    \item Alice picks $X$ instead of $Y$, and then starts applying $\strat$ as if she had claimed $Y$.
    \item If, at some point before Bob has picked $Y$, $\strat$ dictates to pick $X$, then Alice picks $Y$ instead and she continues to apply $\strat$.
    \item If, at some point before $\strat$ has dictated to pick $X$, Bob picks $Y$, then Alice pretends that Bob has picked $X$ instead and she continues to apply $\strat$.
\end{itemize}

Following this strategy, once all the agents have been picked, the respective sets of agents picked by Alice and Bob are the same as if $\strat$ had been executed normally, except that $X$ and $Y$ may have been exchanged. If Alice has picked both $X$ and $Y$, then these sets are actually the same, so the score is the same. If $X$ and $Y$ have been exchanged, then the score is at least as high, since the classical assignment problem is monotone under replacing an agent by one that is better at all tasks. 
Thus, picking $X$ instead of $Y$ was optimal.
 \end{proof}

\subsection{Other properties of the draft game}

General bounds for the optimal score can be computed in linear time.
\begin{lemma}\label{lemma:boundscore}
    Every instance $\game$ of the draft game satisfies $0 \leq \score(\game) \leq \underset{X \in \game}{\max} \,\lVert X \rVert_{\infty}$.
\end{lemma}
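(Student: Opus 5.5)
The lower bound $\score(\game) \geq 0$ is exactly Corollary \ref{corollary:score-positif}, so the work lies in the upper bound. Let $M = \max_{X \in \game} \lVert X \rVert_{\infty}$. I will exhibit a strategy for Bob that guarantees a score of at most $M$. The natural candidate is a \emph{pairing/shadowing} strategy. Order the agents arbitrarily in a way compatible with a fixed comparison, and group Alice's and Bob's picks into consecutive rounds: Alice's $k$-th pick $A_k$ is matched with Bob's $k$-th pick $B_k$. When $n$ is odd, Alice's last pick $A_{\lceil n/2\rceil}$ is left unmatched.

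A strategy based on dominance alone fails, since agents are not totally ordered. Instead, I will use a purely assignment-theoretic argument that does not depend on which agents Bob picks. Take Alice's optimal assignment $a$ on $\cA$. It uses at most $t$ agents, and each contributes at most $M$. This crude bound gives nothing directly, so I need a comparison with Bob's team.

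Bob's strategy will be to \emph{mirror by greedy best response}: each time Alice picks an agent $A_k$, Bob picks an available agent $B_k$ chosen as follows. Bob maintains an injective map from Alice's agents to his own. When Alice's agent $A_k$ is assigned a task $i$ in Alice's eventual assignment, we would like $B_k^{(i)}$ to be close to $A_k^{(i)}$. Since Alice's final assignment is unknown, I switch to a cleaner argument.

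\textbf{Simplest route: ordering by a single potential.} I think the intended argument is this. Consider the alternative game in which Alice's first pick is discarded. From then on, Bob ``goes first'' in a game with the roles of the two players swapped, so he can steal Alice's position as in Corollary \ref{corollary:score-positif}. Concretely, let $X_0$ be Alice's first pick. On the remaining position (Bob to move, all other agents free), the starting-position symmetry argument in the proof of Corollary \ref{corollary:score-positif} applies to the instance $\game \setminus \{X_0\}$. Bob, now first, can guarantee that (his value) $-$ (Alice's value without $X_0$) is at least $0$, i.e. $\textup{opt}(\cA\setminus\{X_0\}) - \textup{opt}(\cB) \leq 0$. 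This holds because $\score(\game\setminus\{X_0\}) \geq 0$ for the first player, with the roles swapped.

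It then remains to show $\textup{opt}(\cA) \leq \textup{opt}(\cA \setminus \{X_0\}) + M$. This holds because removing $X_0$ from an optimal assignment of $\cA$ yields a feasible assignment of $\cA\setminus\{X_0\}$ whose value drops by exactly the efficiency of $X_0$ on its task, which is at most $\lVert X_0\rVert_\infty \leq M$. Combining the two inequalities gives score $\leq M$.

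The main obstacle is making the role-swap rigorous. After Alice's first move, the position is $(\game,\{X_0\},\varnothing)$ with Bob to play. Bob plays an optimal first-player strategy for the instance $\game\setminus\{X_0\}$, viewing himself as the maximizer. By Corollary \ref{corollary:score-positif} applied to that instance with players renamed, this yields $\textup{opt}(\cB) - \textup{opt}(\cA\setminus\{X_0\}) \geq 0$ regardless of Alice's replies. I will check that this renaming is legitimate: the game is symmetric in the players except for the order of play.
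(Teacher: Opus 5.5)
Your final ``simplest route'' is correct and is essentially the paper's own proof: after Alice's first pick $X_0$, Bob plays an optimal first-player strategy on $\game\setminus\{X_0\}$ with roles swapped. Corollary~\ref{corollary:score-positif} then gives $\textup{opt}(\cA\setminus\{X_0\})-\textup{opt}(\cB)\le 0$, and restoring $X_0$ raises Alice's value by at most $\lVert X_0\rVert_\infty$. You should delete the abandoned pairing and greedy-mirroring attempts from the write-up.
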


\begin{proof}
By Corollary \ref{corollary:score-positif}, we already know that $\score(\game) \geq 0$. We now show that Bob has a strategy ensuring a score of at most $\underset{X \in \game}{\max} \,\lVert X \rVert_{\infty}$. Let $X_0 = (x_1, \dots, x_t)$ be the first agent picked by Alice. Define an instance $\game' = \game \setminus \{ X_0\}$ where $X_0$ does not exist, and let $s_0 = sc(\game')$. After Alice has picked $X_0$, Bob applies a strategy that is optimal for $\game'$. By doing so, he ensures that the score if Alice could not use her agent $X_0$ would be at most $-s_0$. Since she actually can use $X_0$, the score is at most $ - s_0 + \lVert X_0 \rVert_{\infty} \leq \lVert X_0 \rVert_{\infty}$.  Let $\pos$ be a position and $\cX$ be the set of free agents in $\pos$.
If there is a free agent that has the highest efficiency for both tasks, then that agent dominates all the agents, so it is optimal to pick it by Lemma \ref{dominating agent}. Therefore, suppose that this is not the case. Consider two agents $X_1=(x_1,y_1')$, $Y_1=(x_1',y_1) \in \cX$ that have the highest efficiency respectively in the first task and in the second task.

Let $U=(x,y) \in \cX \setminus \quickset{X_1,Y_1}$. If $U$ is dominated by $X_1$ or $Y_1$, then by Lemma~\ref{dominating agent} there is nothing prove. Therefore, assume that $U$ is dominated by neither $X_1$ nor $Y_1$, i.e. $x_1 \geq x \geq x_1'$ and $y_1 \geq y \geq y_1'$. Suppose for a contradiction that $U$ is an optimal pick, but that $X_1$ and $Y_1$ are not.

We may suppose that $U$ is not dominated by any other agent, as any agent dominating $U$ would also be optimal. We write $x_2$ (resp. $y_2$) the best efficiency for the first task (resp. second task) among all agents except $X_1$, $Y_1$, and $U$.
\end{proof}

The next lemma formalizes the idea that it is always optimal to pick an agent which impacts the score so dramatically that it cannot be compensated for the entire rest of the game.

\begin{lemma}\label{dominating agent}
    Let $\pos$ be a position of the draft game. Let $X_1, \dots, X_m$ be the free agents in $\pos$, where $X_i = (x^i_1, \dots, x^i_t)$. For all $i \in \interval{1}{t}$, let $\alpha_i$ and $\beta_i$ denote the highest efficiencies for the $i$-th task among the agents already picked by Alice and Bob respectively. 
    Suppose that there exists $(i,j) \in \interval{1}{t} \times \interval{1}{m}$ such that $\max (x^j_i - \alpha_i, x^j_i - \beta_i ) \ge 2\underset{\substack{1 \leq \ell \leq m \\ \ell \ne j}}{\sum} \underset{1 \le k \le t}{\max} (x^{\ell}_k)$. Then, picking $X_j$ is optimal for whoever is the next player, and we say that the agent $X_j$ is {\em dominating} in the position $\pos$.
\end{lemma}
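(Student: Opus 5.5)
The plan is a strategy-stealing comparison, in the spirit of the proofs of Lemma~\ref{milnor} and Lemma~\ref{lemma dominating agents}. By symmetry, assume Alice is next to play. Write $X=X_j$, $x=x^j_i$ and $M=\sum_{\ell\neq j}\max_k x^\ell_k$. The goal is: whenever Alice can guarantee a score at least $s$ with a strategy $\strat$ whose first pick is some $Y\neq X$, she can also guarantee at least $s$ by picking $X$ first.

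The first ingredient is a pair of marginal-value estimates for the classical assignment value $\textup{opt}$.
\begin{itemize}[noitemsep,nolistsep]
\item \emph{Upper bound.} Adding or removing an agent $Z$ changes $\textup{opt}$ by at least $0$ and at most $\max_k Z^{(k)}$. Indeed, monotonicity gives the lower bound, and deleting $Z$ from an optimal assignment gives the upper bound.
\item \emph{Lower bound.} For any set $S$, we have $\textup{opt}(S\cup\{X\})-\textup{opt}(S)\ge x-\textup{best}_i(S)$, where $\textup{best}_i(S)$ is the highest efficiency for task $i$ in $S$. To see this, put $X$ on task $i$ in an optimal assignment of $S$; this displaces an agent worth at most $\textup{best}_i(S)$ on task $i$.
\end{itemize}
At the end of the game, Alice's set is $\cA$ plus some free agents and Bob's is $\cB$ plus others. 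Hence $\textup{best}_i\le\alpha_i+M$ on Alice's side and $\le\beta_i+M$ on Bob's side.

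Fix the final allocation of every agent other than $X$. Moving $X$ from Bob's team to Alice's increases the score by Alice's marginal gain plus Bob's marginal loss. By the lower bound and nonnegativity of each term, this swing is at least $\max(x-\alpha_i,x-\beta_i)-M\ge M$.

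Next comes the simulation. After taking $X$, Alice pretends she took $Y$ and follows $\strat$. Whenever $\strat$ tells her to take $X$, she takes an arbitrary free agent instead and treats it as the new ``phantom'' agent, as in Lemma~\ref{milnor}. At the end we compare the real allocation with a legitimate $\strat$-play $\Pi$, which has score $\ge s$. The two differ in two ways only: the placement of $X$, and the placement of the phantom and $Y$-type agents, which are agents other than $X$. If $X$ is Bob's in $\Pi$, we gain the swing, which is at least $M$. Meanwhile, relocating agents other than $X$ costs at most the sum of their $\max_k$-values, which is at most $M$ by the upper bound. If $X$ is already Alice's in $\Pi$, the two allocations only exchange agents of total weight at most $M$ in a way favourable to Alice, as in Lemma~\ref{lemma dominating agents}.

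The step I expect to be the main obstacle is this bookkeeping. I need to check that the simulated allocation differs from $\Pi$ by moving each non-$X$ agent at most once, so that the total loss is at most $M$ rather than $2M$. The factor $2$ in the hypothesis should be exactly what absorbs the worst case, where one loses the $M$ from the lower bound on the swing and $M$ from relocations. If the clean matching of deviations fails, I would instead compare both allocations to the common allocation ``$\Pi$ with $X$ removed''. That gives at most $M$ loss on each side of the comparison, against a gain of $x-\min(\alpha_i,\beta_i)\ge 2M$.
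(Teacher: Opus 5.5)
Your two marginal-value estimates are correct, but the bookkeeping step you flag does not close as written. The real danger is not that an agent gets moved twice. Moving a \emph{single} agent $Z\neq X$ from Alice's team to Bob's can cost $2\max_k Z^{(k)}$, not $\max_k Z^{(k)}$, because Alice may lose that much and Bob may gain that much. Both plays give Alice the same number of agents. So whenever $X$ is Bob's in $\Pi$ but Alice's in reality, some other agent (in effect $Y$) goes the other way. For instance, take two tasks, nothing picked yet, and only the free agents $X=(2M,0)$ and $Y=(0,M)$. The hypothesis holds with equality, yet moving $Y$ from Alice to Bob costs exactly $2M$, whichever team holds $X$. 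Your bounds (swing at least $M$, relocation loss at most $M$), with the loss corrected to $2M$, then give only $M-2M<0$. Your fallback has the mirror-image defect. Relative to a common allocation that still contains other free agents, the gain from $X$ is not $x-\min(\alpha_i,\beta_i)$. It is only at least $x-\min(\alpha_i,\beta_i)-(M-w)$, where $w$ is the total weight of the agents placed differently. What actually closes is the joint trade-off: gain at least $M+w$ against loss at most $2w$.

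The simulation also needs fixing. You never say how Alice reads Bob picking $Y$, which is hers in the simulated play. Moreover, replacing $X$ by an \emph{arbitrary} free agent breaks the comparison. If she takes $Z\neq Y$ and Bob later takes $Y$, the outcome is $\Pi$ with $Y$ and $Z$ exchanged while $X$ is Alice's in both, and no swing from $X$ compensates. Use instead the rules of Lemma~\ref{lemma dominating agents}: take $Y$ when $\strat$ asks for $X$, and read Bob taking $Y$ as Bob taking $X$. Then the outcome is either $\Pi$ itself or $\Pi$ with $X$ and $Y$ exchanged. In the latter case, let $S_A,S_B$ be the common parts and $y^*=\max_k Y^{(k)}$. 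You get $\textup{best}_i(S_A)\le\alpha_i+M-y^*$, and likewise for $S_B$. So the score difference is at least $\max(0,x-\textup{best}_i(S_A))+\max(0,x-\textup{best}_i(S_B))-2y^*\ge M-y^*\ge 0$.

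Your route can therefore be repaired, but the paper needs no simulation at all. Let $s_0$ be the provisional score. After picking $X_j$, every continuation gives at least $s_0+\max(0,x^j_i-\alpha_i)-M$. After any other first move, Bob can grab $X_j$ and cap the score at $s_0+M-\max(0,x^j_i-\beta_i)$. The hypothesis makes the first bound at least the second, and this is exactly where the factor $2$ is used.
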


\begin{proof}
    By symmetry of the argument, assume that Alice is the next player.
    Let $s_0$ be the provisional optimal score, i.e. the one that would be obtained if Alice and Bob could only use the agents that they have already picked in the position $\pos$. Let $s_1$ (resp. $s_2$) be the optimal score in the scenario where Alice picks $X_j$ (resp. does not pick $X_j$) on her next move.

    Suppose that Alice picks $X_j$ on her next move. Then, she could improve her provisional assignment by $\max(0,x_i^j-\alpha_i)$, since she could choose to assign $X_j$ to the $i$-th task at the end. Meanwhile, Bob will improve his by at most $\underset{1 \le k \le t}{\max} (x^{\ell}_k)$ for each free agent $X_{\ell}$ that he picks. All in all, we get:
    $$ s_1 \geq s_0 + \max(0,x_i^j-\alpha_i) - \underset{\substack{1 \leq \ell \leq m \\ \ell \ne j}}{\sum} \underset{1 \le k \le t}{\max} (x^{\ell}_k). $$
    
    Suppose now that Alice does not pick $X_j$ on her next move. Then, Bob could pick $X_j$ himself, which yields in analogous fashion:
    $$ s_2 \leq s_0 + \underset{\substack{1 \leq \ell \leq m \\ \ell \ne j}}{\sum} \underset{1 \le k \le t}{\max} (x^{\ell}_k) - \max(0,x_i^j-\beta_i). $$

    We now use the assumption that $\max (x^j_i - \alpha_i, x^j_i - \beta_i ) \ge 2\underset{\substack{1 \leq \ell \leq m \\ \ell \ne j}}{\sum} \underset{1 \le k \le t}{\max} (x^{\ell}_k)$. If $x^j_i - \alpha_i \ge 2\underset{\substack{1 \leq \ell \leq m \\ \ell \ne j}}{\sum} \underset{1 \le k \le t}{\max} (x^{\ell}_k)$, then:
    $$ s_1 \geq s_0 + \max(0,x_i^j-\alpha_i) - \underset{\substack{1 \leq \ell \leq m \\ \ell \ne j}}{\sum} \underset{1 \le k \le t}{\max} (x^{\ell}_k) \geq s_0 + \underset{\substack{1 \leq \ell \leq m \\ \ell \ne j}}{\sum} \underset{1 \le k \le t}{\max} (x^{\ell}_k) \geq s_2. $$
    If $x^j_i - \beta_i \ge 2\underset{\substack{1 \leq \ell \leq m \\ \ell \ne j}}{\sum} \underset{1 \le k \le t}{\max} (x^{\ell}_k)$, then:
    $$ s_2 \leq s_0 + \underset{\substack{1 \leq \ell \leq m \\ \ell \ne j}}{\sum} \underset{1 \le k \le t}{\max} (x^{\ell}_k) - \max(0,x_i^j-\beta_i) \leq s_0 - \underset{\substack{1 \leq \ell \leq m \\ \ell \ne j}}{\sum} \underset{1 \le k \le t}{\max} (x^{\ell}_k) \leq s_1. $$

    In both cases, we get $s_1 \geq s_2$, which concludes the proof.
 \end{proof}

\begin{lemma} \label{lemma: paired forcing move} 
Let $\pos$ be a position of the draft game, and let $X,Y$ be two free agents in $\pos$.
Suppose that both players have optimal strategies from the position $\pos$ such that, as soon as a player picks an agent in $\{X,Y\}$, the opponent immediately picks the other agent in $\{X,Y\}$. Then, there exist optimal strategies for both players from the position $\pos$ where they start by picking $X$ and $Y$ (in some order), and we say that $\{X,Y\}$ is a {\em dominating pair} in the position $\pos$.
\end{lemma}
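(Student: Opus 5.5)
The plan is to treat an exchange on the pair $\{X,Y\}$ (one player takes one of the two agents and the opponent immediately takes the other) as a tempo-neutral interlude that can be moved to the front of the game without changing what either player can guarantee. Assume by symmetry that Alice is next to play on $\pos$. Let $\strat_A,\strat_B$ be the optimal strategies from the hypothesis and $v=\score(\pos)$. Let $\pos_{XY}$ (resp. $\pos_{YX}$) be the position obtained from $\pos$ by giving $X$ to Alice and $Y$ to Bob (resp. the reverse), still with Alice to play, and let $v'=\max(\score(\pos_{XY}),\score(\pos_{YX}))$. Say the maximum is attained by $\pos_{XY}$. The goal is to prove $v'=v$. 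Then "Alice picks $X$, Bob answers $Y$, both continue optimally in $\pos_{XY}$" is optimal for both players.

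\textbf{Bob's side ($v\le v'$).} This direction is the cleanest, because $\strat_B$ reacts to pair moves, which is exactly what Bob must do. Bob plays as follows on $\pos$: whenever Alice takes an agent of the pair, he takes the other one; otherwise he simulates $\strat_B$ on a virtual game in which the pair is still free. If $\strat_B$ ever dictates that Bob take an agent of the pair, say $X$, Bob pretends virtually that he took $X$ and that Alice answered with $Y$. This virtual answer is legal, and it is the answer $\strat_A$ would give. Both players then keep following the simulation. The final partition differs from a play consistent with $\strat_B$ at most by how the pair is split, and each split is covered by $\score(\pos_{XY})$ or $\score(\pos_{YX})$. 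Turning this into the inequality $v\le v'$ requires a case analysis on who touches the pair first in the simulation.

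\textbf{Alice's side ($v\ge v'$).} By Lemma \ref{milnor} applied to the position obtained after Alice picks $X$, it suffices to show two things. First, Bob cannot gain by declining to answer $Y$. Second, once $X$ and $Y$ are split, the continuation value is exactly $\score(\pos_{XY})$. For the first point, use $\strat_B$: since it is optimal and always answers a pair move with the other agent, answering $Y$ is an optimal reply for Bob after Alice picks $X$. Combined with the optimality of $\strat_A$ and the same simulation argument, this gives $\score(\pos)\ge\score(\pos_{XY})$ or $\score(\pos)\ge\score(\pos_{YX})$, hence $v\ge v'$ by the choice of $v'$.

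\textbf{Main obstacle.} The hard step is matching the orientation of the split, i.e. deciding whether Alice ends up with $X$ or with $Y$. In the simulation, the strategy may take the pair agent opposite to the one already committed in the real game. The realized partition is then not the simulated one but the simulated one with $X$ and $Y$ exchanged, and Lemma \ref{lemma dominating agents} does not apply since $X$ and $Y$ need not be comparable. My first attempt would be to let Alice choose the orientation only at the end, by comparing $\score(\pos_{XY})$ and $\score(\pos_{YX})$ and using the fact that the simulation arguments hold for each orientation separately. The fallback is to show directly that, in any play consistent with $\strat_A$ and $\strat_B$, the orientation of the pair is fixed, so that it can be read off in advance.
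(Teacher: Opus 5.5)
There is a genuine gap, above all in the direction $v\le v'$. Your Bob strategy simulates $\strat_B$, whose only guarantee is a score of at most $v$. Even with the case analysis completed, it would therefore relate the real score to $v$ (up to an uncontrolled swap of $X$ and $Y$), never to $\score(\pos_{XY})$ or $\score(\pos_{YX})$; the sentence ``each split is covered'' carries the whole argument. The orientation problem you flag is also left open. The split is decided during play, so it cannot be chosen ``at the end'', and your fallback is not carried out.

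The missing idea is that no simulation is needed, and that the relevant property for this direction is that $\strat_A$ answers Bob's pair moves. Let $Z$ be the first move of $\strat_A$; if $Z\in\{X,Y\}$ you are done. Otherwise Bob may answer $Z$ by $Y$, and $\strat_A$ then takes $X$. This reaches the position $\pos'$ in which Alice owns $X,Z$ and Bob owns $Y$, with Bob to move. Its value is at least $v$ because $\strat_A$ is optimal. Alice reaches the very same $\pos'$ by opening with $X$ (Bob answers $Y$) and then taking $Z$, so opening with $X$ is at least as good as opening with $Z$. In your notation, $v\le\score(\pos_{XY})\le v'$, since $\pos'$ is one Alice move away from $\pos_{XY}$. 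This is the paper's argument. The orientation issue never arises because you \emph{choose} which agent of the pair Bob takes in the hypothetical line, and Alice's forced reply then produces the same split as her opening.

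On Alice's side, consider your step ``$\strat_B$ is optimal and answers pair moves, so $Y$ is an optimal reply once Alice opens with $X$''. It holds only if the hypothesis is read as optimality in every position reached, including after a deviation such as opening with $X$. The paper's ``Bob will pick $Y$ by assumption'' uses the hypothesis in that sense too.

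For strategies that are merely optimal from $\pos$, the step is false. Take agents $Z=(100,0,0)$, $X=(0,10,0)$, $Y=(0,0,10)$. The Bob strategy answering $X$ by $Y$, $Y$ by $X$ and $Z$ by $X$ concedes exactly the optimal score $100$. Yet after an opening $X$ his only good reply is $Z$, and no optimal Alice strategy starts in the pair.

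Under that stronger reading, $v\ge\score(\pos_{XY})$ and $v\ge\score(\pos_{YX})$ are both immediate, so Lemma~\ref{milnor} and the simulation are superfluous. As written, though, your conclusion ``$\score(\pos)\ge\score(\pos_{XY})$ or $\score(\pos)\ge\score(\pos_{YX})$, hence $v\ge v'$'' is a non sequitur. A disjunction only bounds $v$ from below by the smaller of the two values, not by their maximum $v'$.
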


\begin{proof}
    By symmetry of the argument, assume that Alice is the next player. Let $\strat$ be an optimal strategy for Alice from the position $\pos$, and let $Z$ be the agent that $\strat$ dictates to pick first.
 Assume that $Z \not\in \{X,Y\}$, otherwise we are already done.

 If Alice picks $Z$, then Bob could pick $Y$, which would prompt Alice to pick $X$ by assumption.  Therefore, the optimal score if Alice picks $Z$ is at most the optimal score playing second on $\pos'$, defined as the position obtained from $\pos$ by giving $X$ and $Z$ to Alice and $Y$ to Bob.

 If Alice picks $X$ instead, then Bob will pick $Y$ by assumption. After that, Alice could pick $Z$, and we would reach $\pos'$ again. Therefore, the optimal score if Alice picks $X$ is at least the optimal score playing second on $\pos'$.

 By comparing the two options, we see that $X$ is at least as good a pick as $Z$ for Alice.
\end{proof}

\begin{lemma}\label{lemma:first-move-two-tasks}
Let $\pos$ be a position of the draft game with only two tasks. Then, there exists a free agent that has the highest efficiency among all free agents for at least one of the tasks and that is an optimal pick for the next player.
\end{lemma}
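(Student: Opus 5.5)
The plan is an exchange argument built on Lemma~\ref{lemma dominating agents}, with a contradiction at the end. Let $\cX$ be the set of free agents, and by symmetry assume Alice is next to play. If some free agent has the highest efficiency for both tasks, it dominates every other free agent, so it is an optimal pick by Lemma~\ref{lemma dominating agents}. Otherwise, fix $X_1=(x_1,y_1')$ with the highest first-task efficiency and $Y_1=(x_1',y_1)$ with the highest second-task efficiency among free agents. Let $U=(x,y)$ be an optimal pick; we may take $U$ undominated, since an agent dominating $U$ is also optimal. If $U$ is dominated by $X_1$ or $Y_1$ we are done. So we may assume $x_1\ge x\ge x_1'$ and $y_1\ge y\ge y_1'$. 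We also assume, for contradiction, that neither $X_1$ nor $Y_1$ is optimal.

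The first step is a normal form for the score when $t=2$. The value of a team is $\max$ over ordered pairs of distinct agents $(P,Q)$ of $P^{(1)}+Q^{(2)}$, also allowing a single agent or none. This value depends only on the top two first-task efficiencies, the top two second-task efficiencies, and whether the same agent holds both tops. Write $x_2$ (resp.\ $y_2$) for the best first-task (resp.\ second-task) efficiency among free agents other than $X_1,Y_1,U$. Along any continuation, the contributions of $X_1$, $Y_1$ and $U$ to each team can then be written explicitly in terms of $x_1,x_1',y_1,y_1',x,y,x_2,y_2$ and the already-picked maxima $\alpha_i,\beta_i$ from Lemma~\ref{dominating agent}.

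The second step is to transfer an optimal strategy $\strat$ that starts with $U$. Alice picks $X_1$ (respectively $Y_1$) and pretends it is $U$. If $\strat$ later asks for $X_1$, she takes $U$ instead. If Bob takes $U$, she pretends he took $X_1$. This is exactly the bookkeeping in the proof of Lemma~\ref{lemma dominating agents}. At the end, the real partition differs from the simulated one only by swapping $U$ with $X_1$ (respectively $Y_1$) between the two teams, or not at all. Replacing $U$ by $X_1$ raises Alice's first-task potential by $x_1-x\ge 0$ but may lower her second-task potential by at most $y-y_1'$. Bob's team changes symmetrically. The analogous statement holds for $Y_1$, with the roles of the tasks exchanged.

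The main obstacle is turning these swaps into a contradiction. The strict inequalities ``$X_1$ worse than $U$'' and ``$Y_1$ worse than $U$'' each force a specific endgame in which the swap hurts. For $X_1$, Alice's optimal assignment must use $U$ on task~2 while Bob obtains something useful from $X_1$ on task~1, so the loss is governed by $y-\max(y_1',y_2)$. For $Y_1$, Alice must use $U$ on task~1, and the loss is governed by $x-\max(x_1',x_2)$. I would first try to add the two inequalities. Case analysis on who holds the runner-up efficiencies $x_2,y_2$, and on which of $X_1,Y_1$ Bob can secure, should show that the combined loss is at most the combined gain $(x_1-x)+(y_1-y)$. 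That contradicts both swaps being strictly harmful, so $X_1$ or $Y_1$ is optimal. Tracking Bob's simultaneous gains in this comparison is where I expect most of the work.
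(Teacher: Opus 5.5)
Your reduction to an undominated optimal $U$ with $x_1\ge x\ge x_1'$ and $y_1\ge y\ge y_1'$ matches the paper, and your swap bookkeeping is correct. The gap is the step you call the main obstacle: it is the whole content of the lemma, and the mechanism you sketch for it does not work.

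If the simulated final teams are $A_0\cup\{U\}$ and $B_0\cup\{X_1\}$, the swap changes the score by exactly $g(A_0)+g(B_0)$, where $g(S)=\textup{opt}(S\cup\{X_1\})-\textup{opt}(S\cup\{U\})$. This depends on the entire residual teams. Moreover, harm need not involve Alice using $U$ on task~2 at all: it can come entirely from Bob using $U$ on task~2 after the swap.

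The failure of the $X_1$-transfer gives you one harmful endgame, reached by Bob's refutation of one strategy. The failure of the $Y_1$-transfer gives another, reached by Bob's refutation of a different strategy. The residual teams in these two endgames are unrelated, so adding the two strict inequalities yields no contradiction by itself. Nor can a count in the parameters close it: for $X_1=(10,0)$, $Y_1=(0,10)$, $U=(9,9)$ and $x_2=y_2=0$, your losses $y-\max(y_1',y_2)$ and $x-\max(x_1',x_2)$ sum to $18$, while $(x_1-x)+(y_1-y)=2$.

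What you would actually need is that Bob cannot force both harmful endgames against an optimal $\mathcal{S}$. That is a statement about the game tree, and it is stronger than the lemma, which only promises that \emph{some} strategy starting with $X_1$ or $Y_1$ is optimal, not that the transfer of a given $\mathcal{S}$ is. Your proposal gives no argument for it.

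The paper does not transfer strategies. It compares explicit lines of play on the reduced position (top two efficiencies per task). First, it argues that Bob may be assumed to answer $X_1$ with $Y_1$ and $Y_1$ with $X_1$, since otherwise Alice secures both top agents. Then it matches the line in which Alice takes $U$ and Bob replies $Y_1$ as follows. When $x\le x_2$, the line $X_1$, $Y_1$, followed by Alice taking the better of $U$ and the best remaining task-2 agent, gives the same expression. When $x>x_2$ and $y>y_2$, the line $X_1$, $Y_1$, $U$ gives it. The case $y\le y_2$ is symmetric, with $Y_1$ played first. This move-order transposition, which only compares positions Bob can actually steer the game into, is the idea your plan lacks.
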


\begin{proof}
 Observe that only the top two efficiencies for each task can be used in the final assignment for each player: either the two top efficiencies belong to two different agents and thus the player has no reason to use any of the other agents, or the top two efficiencies belong to the same agent, and then the player may want to use the second best efficiency for either the first or the second task. Thus, instead of considering a full position, we will only consider the following reduced position:

Given a position $\pos=(\game,\cA,\cB)$ on a $\game$ on two tasks, and a current player $p$, we build a reduced position 
$(\cX,x_{A,0},x_{A,1},y_{A,0},y_{A,1},x_{B,0},x_{B,1},y_{B,0},y_{B,1},p)$:
\begin{itemize}
\item $\cX = \game \setminus \left( \cA \cup \cB \right)$ is the set of free agents,
\item For Alice, $x_{A,0} \geq x_{A,1}, y_{A,0} \geq y_{A,1}$ are such that:
	\begin{itemize}
	\item If Alice's top efficiencies for both tasks belong to the same agent they are respectively the first and second best efficiencies for the first task and the first and second best efficiencies for the second task among Alice's agents.
	\item If the top efficiencies belong to different agents picked by Alice, $x_{A,0} = x_{A,1}$ is the best efficiency for the first task among Alice's agents, and $y_{A,0}=y_{A,1}$ is the best efficiency for the second task among Alice's agents.
	\end{itemize}
\item For Bob, $x_{B,0} \geq x_{B,1}, y_{B,0} \geq y_{B,1}$ are defined as for Alice but with Bob's agents.
\item $p$ is the player who is next to play.
\end{itemize}

When no agents are remaining to be picked, these two efficiencies are sufficient to compute the final score, which is $\underset{i \in \interval{0}{1}}{\max}(x_{A,i}+y_{A,i+1}) - \underset{i \in \interval{0}{1}}{\max}(x_{B,i}+y_{B,i+1})$. Moreover, after a player, say Alice, picks an agent $(x,y)$, we can update the reduced position without referring to the full position. We have to remove the picked agent from the set of not picked agents, Bob's efficiencies are kept, and:

\begin{itemize}
\item If $x>x_{A,0}$ and $y>y_{A,0}$ then we can update $x_{A,0}$ to $x$, $x_{A,1}$ to the old value of $x_{A,0}$, $y_{A,0}$ to $y$ and $y_{A,1}$ to the old value of $y_{A,0}$. Note that under optimal play, this situation cannot occur after the first move as it can only happen if $(x,y)$ dominates all the agents already picked by Alice.
\item Else,
	\begin{itemize}
	\item If $x>x_{A,0}$, we set $x_{A,0}$ to $x$, and $x_{A,1}$ to the old value of $x_{A,0}$,
	\item If $x_{A,0} \geq x > x_{A_1}$, we keep $x_{A,0}$ and set $x_{A,1}$ to $x$,
	\item If $x_{A,1} \geq x$, we keep both $x_{A,0}$ and $x_{A,1}$ efficiencies,
	\item The same is done for the efficiencies for the second task.
	\end{itemize}
\end{itemize}

The same can be done when Bob picks an agent.

Then, we can consider those reduced positions to prove our lemma.

Consider a position $\pos$ and its associated reduced position with $\cX$ the set of free agents, and $x_{A,0} \geq x_{A,1}, y_{A,0} \geq y_{A,1}, x_{B,0} \geq x_{B,1}, y_{B,0} \geq y_{B,1}$. Without loss of generality, we will suppose that Alice is the next to pick, the result for Bob will follow by symmetry.

If there is a free agent that has the highest efficiency for both tasks, then that agent dominates all the agents, so it is optimal to pick it by Lemma \ref{dominating agent}. Therefore, suppose that this is not the case. Consider two agents $X_1=(x_1,y_1')$, $Y_1=(x_1',y_1) \in \cX$ that have the highest efficiency respectively in the first task and in the second task.

If for any other agent $U \in \cX$, $U$ is either dominated by $X_1$ or $Y_1$ then it is always better to pick $X_1$ or $Y_1$ and there is nothing to prove.

Let $U=(x,y) \in \cX \setminus \quickset{X_1,Y_1}$. If $U$ is dominated by $X_1$ or $Y_1$, then by Lemma~\ref{dominating agent} there is nothing prove. Therefore, assume that $U$ is dominated by neither $X_1$ nor $Y_1$, i.e. $x_1 \geq x \geq x_1'$ and $y_1 \geq y \geq y_1'$. Suppose for a contradiction that $U$ is an optimal pick, but that $X_1$ and $Y_1$ are not.

We may suppose that $U$ is not dominated by any other agent in $\cX$, as any agent dominating $U$ would also be optimal. We write $x_2$ (resp. $y_2$) the best efficiency for the first task (resp. second task) among all agents except $X_1$, $Y_1$, and $U$.

If Alice picks $U$ first:

By picking $X_1$, Bob guarantees a score of at most (addition in indices are considered modulo~$2$):
\begin{equation}
\max_{i \in \interval{0}{1}}(\max(x,x_2,x_{A,i})+\max(y_1,y_{A,i+1})) - \max_{i \in \interval{0}{1}}(\max(x_1,x_{B,i})+\max(y_2,y_{B,i+1}))
\label{score_u_x1}
\end{equation}

By picking $Y_1$, Bob guarantees a score of at most:
\begin{equation}
\max_{i \in \interval{0}{1}}(\max(x_1,x_{A,i})+\max(y,y_2,y_{A,i+1})) - \max_{i \in \interval{0}{1}}(\max(x_2,x_{B,i})+\max(y_1,y_{B,i+1}))
\label{score_u_y1}
\end{equation}

If Alice picks $X_1$ and Bob answers by picking $Y_1$, by picking the best one among $U$ and the best agent with efficiency $y_2$ for the second task Alice guarantees a score of at least: 
\begin{equation}
\max_{i \in \interval{0}{1}}(\max(x_1,x_{A,i})+\max(y,y_2,y_{A,i+1})) - \max_{i \in \interval{0}{1}}(\max(x,x_2,x_{B,i})+\max(y_1,y_{B,i+1}))
\label{score_x1_y1}
\end{equation}

If Alice picks $Y_1$ and Bob answers by picking $X_1$, by picking the best one among $U$ and the best agent with efficiency $x_2$ for the second task Alice guarantees a score of at least: 
\begin{equation}
\max_{i \in \interval{0}{1}}(\max(x,x_2,x_{A,i})+\max(y_1,y_{A,i+1})) - \max_{i \in \interval{0}{1}}(\max(x_1,x_{B,i})+\max(y,y_2,y_{B,i+1}))
\label{score_y1_x1}
\end{equation}

If Alice picks $X_1$ or $Y_1$ and Bob does not pick the other one, by grabbing the other one Alice guarantees a score of at least: 
\begin{equation}
\max_{i \in \interval{0}{1}}(\max(x_1,x_{A,i})+\max(y_1,y_{A,i+1})) - \max_{i \in \interval{0}{1}}(\max(x,x_2,x_{B,i})+\max(y,y_2,y_{B,i+1}))
\label{score_both_good}
\end{equation}

Observe that the value of expression~\ref{score_both_good} is always greater, \textit{i.e.} better for Alice, than both the expressions~\ref{score_x1_y1} and~\ref{score_y1_x1}, thus we may suppose without loss of generality that Bob answers by the other when Alice picks either $X_1$ or $Y_1$.

Moreover, when $x\leq x_2$, expression~\ref{score_x1_y1} become the same as expression~\ref{score_u_y1}, thus by picking $X_1$ first Alice can always guarantee a score greater than or equal to a value Bob can guarantee the score to be lower than or equal to if she picks $U$ first, contradiction with none of $X_1$ and $Y_1$ being optimal. 

Symmetrically, when $y\leq y_2$, there is also a contradiction with none of $X_1$ and $Y_1$ being optimal.

Thus, we can now suppose that $x>x_2$ and $y>y_2$. In this case, we go back to the scenarios where Alice picks either $X_1$ or $Y_1$ first. As we said earlier, we can suppose Bob answers by picking the other one. Then, Alice picks $U$.

If she picks $X_1$, this guarantees a score of at least:
\begin{equation}
\max_{i \in \interval{0}{1}}(\max(x_1,x_{A,i})+\max(y,y_{A,i+1})) - \max_{i \in \interval{0}{1}}(\max(x_2,x_{B,i})+\max(y_1,y_{B,i+1}))
\label{score_x1_y1_u}
\end{equation}

Then simplifying expression~\ref{score_u_y1} with the fact that $y>y_2$, it becomes the same as expression~\ref{score_x1_y1_u}. Thus, by picking $X_1$ first Alice can always guarantee a score greater than or equal to a value Bob can guarantee the score to be lower than or equal to if she picks $U$ first, contradiction with none of $X_1$ and $Y_1$ being optimal.

Thus, by contradiction, at least one of $X_1$ or $Y_1$ must be an optimal play for Alice. Symmetrically, it is also true when Bob is next to play.
 \end{proof}

Even though Lemma \ref{lemma:first-move-two-tasks} ensures that it is always optimal to pick an agent which maximizes one of the two tasks, that agent will not necessarily be assigned to that task in the end. For instance, consider $\game=\{X,Y,Z\}$ where $X = (4,7)$, $Y = (5,5)$ and $Z =(0,4)$. It is straightforward that $\score(\game)=3$, and that the only optimal sequence of moves is the following: Alice picks $X$ (which maximizes the second task), Bob picks $Y$, Alice picks $Z$. Alice then assigns $X$ to the first task and $Z$ to the second task, to get a score of $4+4-5=3$.

Finally, we note that Lemma \ref{lemma:first-move-two-tasks} does not generalize to instances with three tasks. A counterexample is provided by $\game=\{X_1,X_2,X_3,X_4,X_5,X_6\}$, where: $X_1=(5,0,0)$, $X_2=(0,5,0)$, $X_3=(0,0,5)$, $X_4=(4,4,4)$, $X_5=(0,3,3)$ and $X_6=(3,0,0)$. It can be checked that $\score(\game)=2$, with $X_4$ being the only optimal fist pick for Alice despite $X_4$ maximizing none of the tasks.

\section{\PSPACE-completeness for TTPs}\label{section3}

\begin{theorem}\label{theo_pspace}
    \draft~is \PSPACE-complete, even restricted to agents with at most two nonzero efficiencies.
\end{theorem}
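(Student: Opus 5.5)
Membership in \PSPACE is routine. The game lasts exactly $n$ moves, and each final score is computed in polynomial time by solving two instances of the classical assignment problem. A depth-first minimax evaluation of the game tree therefore uses polynomial space: at each depth we store only the current position and the index of the move being explored.

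For hardness, I would reduce from \TQBFT, i.e. QBF in which the quantifiers alternate $\exists x_1 \forall x_2 \dots$, every clause has three literals, and every variable occurs a bounded number of times (this restriction is presumably why the macro is defined). The intended shape mirrors classical reductions to Maker-Breaker games, and the aim is to keep every agent a TTP.

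\emph{Variable gadget.} For each variable $x_i$ we introduce a pair of agents $T_i, F_i$. Each has one huge efficiency on a private task $v_i$ together with a second nonzero efficiency on a literal-related task. The weights $v_i$ are chosen in a geometrically decreasing sequence, for instance $w_i \gg 2\sum_{j>i} w_j$ plus all clause weights. With this choice, Lemma \ref{dominating agent} and Lemma \ref{lemma: paired forcing move} force play to proceed pair by pair in order $i=1,2,\dots$. The player picking from pair $i$ is the one who must answer inside pair $i$. To make the existential variables be chosen by Alice and the universal ones by Bob, I would insert filler or switch agents, or give the two members of a pair equal private efficiency so that whoever picks first effectively sets $x_i$ and the opponent takes the other. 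The picking order then alternates so that Alice chooses the odd pairs and Bob the even ones.

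\emph{Clause gadget.} For each clause $C$ we create a task $c_C$. The second efficiency of a literal agent is placed on the tasks of the clauses it satisfies. The goal is that Alice gets $+1$ on task $c_C$ exactly when she owns an agent for a true literal of $C$ and Bob does not block her. Since a literal agent can serve only one clause task at a time, a variable occurring in several clauses needs duplicated literal agents. These are tied to the variable pair through auxiliary TTP chains: one efficiency links to the previous agent's private task, the other to the next, forming a forced sequence of dominating pairs. Bounded occurrence keeps these chains small. A final set of agents with tunable efficiencies calibrates the threshold $s$ so that $\score \ge s$ holds iff all clauses are satisfied. Bob's possible "Alice's clause task is empty" outcome must cost exactly one unit per unsatisfied clause, with no compensation available elsewhere.

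\emph{Main obstacle.} The hardest part is proving that the forcing is genuine. I would establish, via Lemmas \ref{lemma dominating agents}, \ref{dominating agent} and \ref{lemma: paired forcing move}, that in every reachable position the dominating agents and pairs are exactly the intended ones, so that deviating costs more than any clause gain. I would also need to verify that the assignment at the end cannot exploit a literal agent in an unintended way, such as using a false-literal agent on a clause task or double-using an agent across tasks. For this I would first try scaling: clause weights are $1$, chain and variable weights are polynomially large and separated. A cheating assignment then loses a large weight to gain at most $m$, and the score analysis reduces to counting satisfied clauses.
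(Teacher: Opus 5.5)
Your membership argument is correct; depth-first minimax is an equivalent alternative to the paper's $\AP=\PSPACE$ argument. The hardness sketch, however, misses the idea that makes the paper's reduction work.

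\textbf{The missing idea: Bob also scores on clause tasks.} The score is $s_A-s_B$, and Bob builds a team out of the agents Alice leaves him. Every time a pair of literal agents is split, Bob receives the complementary agent. Under any symmetric locking scheme, that agent is usable by him exactly as Alice's is by her. So Bob collects $+1$ on the clause tasks of clauses containing a literal that the play made false. The clause part of the score is then $\sum_C([\text{Alice covers }C]-[\text{Bob covers }C])$. This is roughly the number of clauses with a true literal minus the number with a false literal. That quantity does not track whether all clauses are satisfied, so no calibration of $s$ repairs it. Your phrase ``with no compensation available elsewhere'' names the requirement but gives no mechanism.

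The paper's mechanism is the setup gadget. It uses tasks $A,B,S_1,\dots,S_m$, agents $A_1$ ($\alpha$ on $A$) and $B_1$ ($\beta$ on $B$), and per clause $\Gamma_j$ ($\gamma_j$ on $A$) and $\Gamma'_j$ ($\gamma'_j$ on $B$, $1$ on $S_j$), with $\alpha\gg\beta\gg\gamma_1\gg\gamma'_1\gg\cdots$. These are forced dominating agents picked before everything else:
\begin{itemize}
\item Alice takes each $\Gamma_j$ only to deny Bob $\gamma_j$ on $A$.
\item Bob takes each $\Gamma'_j$ to deny Alice $\gamma'_j$ on $B$. Since $B_1$ already fills $B$, Bob ends up with $1$ on every clause task.
\end{itemize}
From then on, Bob's literal agents are worthless on clause tasks. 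The provisional score is $\alpha-\beta-m$, and the threshold $s=\alpha-\beta$ is reached iff Alice fills all $m$ clause tasks.

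\textbf{The gadgets do not work as described.}
\begin{itemize}
\item An agent with a huge efficiency on its private task is always assigned there, so its literal efficiency is dead.
\item A TTP whose two nonzero efficiencies both link to neighbouring chain tasks has no coordinate left for a clause task.
\item The paper uses no chains. The lock agents $X_i,\overline{X_i}$ carry only $a_i$ on $U_i$, resp.\ $\overline{U_i}$. Each occurrence agent carries a smaller weight on the same lock task plus $1$ on its clause task, so it is free for its clause when its owner holds the matching lock agent.
\end{itemize}

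\textbf{Who picks first in each pair matters.} Consuming a dominating pair returns the move to the same player, so pairs do not alternate on their own. More importantly, Alice must pick first in every occurrence pair, including those of universal variables; otherwise Bob simply takes the occurrence agent she needs. This is why the paper inserts the single dominating agents $T_i^A,T_i^B$ as tempo switches: Bob chooses his locks, and then $T_i^B$ hands the initiative back to Alice.

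\textbf{Scaling does not rule out mixed picks, because they cost Alice nothing.} Holding $\overline{X_i},X_{i,1},X_{i,2}$, Alice puts $X_{i,1}$ on $U_i$ and scores $X_{i,2}$'s clause for free. The paper handles this in two ways:
\begin{itemize}
\item Bob reads $x_i$ as true iff Alice holds at least two of $X_i,X_{i,1},X_{i,2}$.
\item Bob's gadget has two separate locks $V_i,W_i$, one per positive occurrence. This prevents Alice from unlocking a positive occurrence of a universal variable that Bob set to false by sacrificing the other occurrence agent to the lock.
\end{itemize}
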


\begin{proof}
Determining whether the optimal score of a given instance of the draft game is greater or equal to some integer $s$ can be decided by a polynomial number of choices (each agent is only picked once) of alternating choices between Alice and Bob, followed by solving the classical assignment problem for both players.
In other words, this decision problem can be expressed using existential quantifiers for Alice's choices and universal quantifiers for Bob's choices, followed by a polynomial-time computation.
Therefore, \draft belongs to the class \AP~of problems decidable in polynomial time by an alternating Turing machine. Since $\AP = \PSPACE$~\cite{chandra1981alternation}, membership in \PSPACE~ensues.

The proof of hardness is a reduction from \TQBFT, which is a special case of \QBF. In the \QBF~game, given a formula $\varphi$ with variables $x_1,y_1,\ldots,x_n,y_n$, Satisfier and Falsifier take turns choosing valuations for $x_1,y_1,\ldots,x_n,y_n$ in that order, with Satisfier going first. Satisfier wins if and only if the formula ends up satisfied. The subproblem \TQBFT, which addresses formulas in conjunctive normal form where each clause has size at most $3$ and each variable appears exactly three times, has been proved to be \PSPACE-complete by Oijid~\cite{Oijid2025}.

Up to removing variables appearing only positively or only negatively (whose valuation can easily be determined depending on their quantifier), we assume that each variable appears at least once and at most twice positively and negatively. Moreover, up to exchanging $x_i$ and $\neg x_i$, which is possible without altering the truth value of the formula, we assume that each variable appears exactly twice positively and once negatively. Although not mandatory for the proof to work, these assumptions will make the reduction easier to describe.

When an agent has a nonzero efficiency in one task only, we can always assume that this agent is assigned to that specific task. Moreover, we can also assume that an agent is never assigned to a task in which an agent with higher efficiency is already assigned. If the two arguments can be applied at the same time, then the agent will just not contribute to the total score of the player. Hence, most of the time in the proof, we will not specify the task that an agent is assigned to,, as the optimal choice will be obvious.

Let $\psi = \exists x_{1} \forall y_1 \dots \exists x_n \forall y_{n} \varphi$ be a quantified formula where $\varphi = C_1 \wedge \dots \wedge C_m$ is quantifier-free. 

\subsection*{Overview of the construction:} 

The main idea of the proof is that Alice will mimic Satisfier's choices and will score one more point with each satisfied clause. The construction $\game$ uses three types of gadgets (each gadget being a set of agents), pictured in Figure \ref{fig:table-pspace-complete}:

\begin{itemize}[nolistsep,noitemsep]
    \item A unique {\em setup gadget} $\mathcal{U}$, consisting in $2m+2$ agents and $m+2$ tasks, ensuring that 
    Bob scores $1$ per clause task, which will be used to determine whether Alice satisfies the formula or not.
    \item For all $i \in \interval{1}{n}$, {\em Alice's choice gadget} $\mathcal{A}_i$, each containing seven agents and two new tasks.
    \item For all $i \in \interval{1}{n}$, {\em Bob's choice gadget} $\mathcal{B}_i$, each containing nine agents and four new tasks.
\end{itemize}
When defining agents, we will only specify their nonzero efficiencies. The agents' efficiencies in the different tasks are chosen using Lemma~\ref{dominating agent} and \ref{lemma: paired forcing move}, so that optimal play respects the order $\mathcal{U}, \mathcal{A}_1, \mathcal{B}_1, \mathcal{A}_2, \mathcal{B}_2, \dots, \mathcal{A}_n, \mathcal{B}_n$ on the gadgets, and all agents of a gadget are picked before play starts in the next gadget.

\subsection*{Setup gadget}

We first introduce two tasks $A, B$, and two agents $A_1,B_1$ such that $A_1$ has efficiency $\alpha$ in $A$ and $B_1$ has efficiency $\beta$ in $B$. We then introduce $m$ tasks $S_1, \dots, S_m$ and $2m$ agents $\Gamma_1, \Gamma_1', \dots, \Gamma_m, \Gamma_m'$, such that Agent $\Gamma_i$ has efficiency $\gamma_i$ in task $A$, and  Agent $\Gamma_i'$ has efficiency $\gamma_i'$ in task $B$ and efficiency $1$ in task $S_i$.

\begin{figure}[t!]
    \centering
    \subfloat[][The setup gadget.]{
    \centering
    \begin{tabular}{c||c|c|c|c|c|c}
                    & A & B & $S_1$ & $S_2$ & $\dots$ & $S_m$ \\ \hline
$A_1$               & $\alpha$ & 0 & 0 & 0 & \dots & 0 \\ \hline
$B_1$               & 0 & $\beta$ & 0 & 0 & \dots & 0 \\ \hline    
$\Gamma_1$          & $\gamma_1$ & 0 & 0 & 0 & \dots & 0 \\ \hline
$\Gamma_1'$         & 0 & $\gamma_1'$ & 1 & 0 & \dots & 0 \\ \hline
$\Gamma_2$          & $\gamma_2$ & 0 & 0 & 0 & \dots & 0 \\ \hline
$\Gamma_2'$         & 0 & $\gamma_2'$ & 0 & 1 & \dots & 0 \\ \hline
$\vdots$            & $\vdots$ & $\vdots$ & $\vdots$ & $\vdots$ & $\ddots$ & $\vdots$ \\ \hline
$\Gamma_m$          & $\gamma_m$ & 0 & 0 & 0 & \dots & 0 \\ \hline
$\Gamma_m'$         & 0 & $\gamma_m'$ & 0 & 0 & \dots & 1 
\end{tabular}}~
\subfloat[][Alice's choice gadget. The variable $x_i$ appears positively in $C_j$ and $C_k$, and negatively in $C_\ell$.]{
        \centering
\begin{tabular}{c||c|c|c|c|c|c}
                    & A         & $U_i$ & $\overline{U_i}$ & $S_j$ & $S_k$ & $S_\ell$ \\ \hline
$X_i$               & 0         & $a_i$ & 0 & 0 & 0 & 0 \\
$\overline{X_i}$    & 0         & 0 & $a_i$ & 0 & 0 & 0 \\ \hline    
$X_{i,1}$           & 0         & $b_i$ & 0 & 1 & 0 & 0 \\
$\overline{X_{i,1}}$& 0         & 0 & $b_i$ & 0 & 0 & 1 \\ \hline
$X_{i,2}$           & 0         & $c_i$ & 0 & 0 & 1 & 0 \\
$\overline{X_{i,2}}$& 0         & 0 & $c_i$ & 0 & 0 & 0 \\ \hline
$T_i^A$             & $t_i^A$   & 0 & 0 & 0 & 0 & 0 \\ 
\end{tabular}
}

\subfloat[][Bob's choice gadget. The variable $y_i$ appears positively in $C_j$ and $C_k$, and negatively in $C_\ell$.]
{
        \centering   
\begin{tabular}{c||c|c|c|c|c|c|c|c}
                    & B         & $V_i$ & $\overline{V_i}$ & $W_i$ & $\overline{W_i}$ & $S_j$ & $S_k$ & $S_\ell$ \\ \hline
$Y_i$               & 0         & $d_i$ & 0 & 0 & 0 & 0 & 0 & 0 \\
$\overline{Y_i}$    & 0         & 0 & $d_i$ & 0 & 0 & 0 & 0 & 0 \\ \hline
$Y_i'$               & 0         & 0 & 0 & $e_i$ & 0 & 0 & 0 & 0 \\
$\overline{Y_i'}$    & 0         & 0 & 0 & 0 & $e_i$ & 0 & 0 & 0 \\ \hline   
$T_i^B$             & $t_i^B$   & 0 & 0     & 0 & 0 & 0 & 0 & 0 \\ \hline
$Y_{i,1}$           & 0         & $f_i$ & 0 & 0 & 0 & 1 & 0 & 0 \\
$\overline{Y_{i,1}}$& 0         & 0 & $f_i$ & 0 & 0 & 0 & 0 & 1 \\ \hline
$Y_{i,2}$           & 0         & 0 & 0 & $g_i$ & 0 & 0 & 1 & 0 \\
$\overline{Y_{i,2}}$& 0         & 0 & 0 & 0 & $g_i$ & 0 & 0 & 1 \\ 
\end{tabular}
}
    \caption{The different gadgets used in our reduction. Rows correspond to agents and are written in decreasing order of efficiencies. Columns correspond to tasks. If two agents are separated by a horizontal line, then the top one is always picked before the bottom one.}
    \label{fig:table-pspace-complete}
\end{figure}

\subsection*{Alice's choice gadget}

Let $i \in\interval{1}{n}$. Suppose that $x_i$ appears positively in the clauses $C_j$ and $C_k$, and negatively in $C_\ell$. We introduce two tasks $U_i, \overline{U_i}$ and seven agents $X_i, \overline{X_i}$, $X_{i,1}, \overline{X_{i,1}}$, $X_{i,2}, \overline{X_{i,2}},T^A_i$ defined as follows: 

\begin{itemize}[nolistsep,noitemsep]
    \item $X_i$ has efficiency $a_i$ in task $U_i$.
    \item $\overline{X_i}$ has efficiency $a_i$ in task $\overline{U_i}$.
    \item $X_{i,1}$ has efficiency $b_i$ in task $U_i$ and efficiency $1$ in task $S_j$.
    \item $\overline{X_{i,1}}$ has efficiency $b_i$ in task $\overline{U_i}$ and efficiency $1$ in task $S_\ell$.
    \item $X_{i,2}$ has efficiency $c_i$ in task $U_i$ and efficiency $1$ in task $S_k$.
    \item $\overline{X_{i,2}}$ has efficiency $c_i$ in task $\overline{U_i}$.
    \item $T^A_{i}$ has efficiency $t^A_i$ in task $A$.
\end{itemize}

 \subsection*{Bob's choice Gadget}

Let $i \in\interval{1}{n}$. Suppose that $x_i$ appears positively in the clauses $C_j$ and $C_k$, and negatively in $C_\ell$. We introduce four tasks $V_i, \overline{V_i}, W_i, \overline{W_i}$ and nine agents $Y_i, \overline{Y_i}, Y_i', \overline{Y_i'}, T_i^B,$ $ Y_{i,1}, \overline{Y_{i,1}}, Y_{i,2},\overline{Y_{i,2}}$ defined as follows: 

\begin{itemize}[nolistsep,noitemsep]
    \item $Y_i$ has efficiency $d_i$ in task $V_i$.
    \item $\overline{Y_i}$ has efficiency $d_i$ in task $\overline{V_i}$.
    \item $Y'_i$ has efficiency $e_i$ in task $W_i$.
    \item $\overline{Y_i'}$ has efficiency $e_i$ in task $\overline{W_i}$.
    \item $T^B_{i}$ has efficiency $t^B_i$ in task $B$.
    \item $Y_{i,1}$ has efficiency $f_i$ in task $V_i$ and efficiency $1$ in task $S_j$.
    \item $\overline{Y_{i,1}}$ has efficiency $f_i$ in task $\overline{V_i}$ and efficiency $1$ in task $S_\ell$.
    \item $Y_{i,2}$ has efficiency $g_i$ in task $W_i$ and efficiency $1$ in task $S_k$.
    \item $\overline{Y_{i,2}}$ has efficiency $g_i$ in task $\overline{W_i}$.
\end{itemize}

\subsection*{Efficiencies and order of the moves}

We write $u \gg v$ to set $u = 5v$. More generally, we write $u_0 \gg u_1 \gg \dots \gg u_k$ to set the efficiencies $u_0 = 5u_1 = \ldots = 5^ku_k $. We define:
$$\alpha \gg \beta \gg \gamma_1 \gg \gamma_1' \gg \gamma_2 \gg \gamma_2' \gg \dots \gg \gamma_m \gg \gamma_m' \gg a_1, $$
and for all $i \in \interval{1}{n}$,
$$ a_i \gg b_i \gg c_i \gg t_i^A \gg d_i \gg e_i \gg t_i^B \gg f_i \gg g_i \gg a_{i+1}, $$
where $a_{n+1}=1$ (note that $a_{n+1}$ does not appear in the game, but it is used to set all the other efficiencies).

Consider an intermediate position where the maximum efficiency over all free agents is $5^k$ for some $k \geq 1$, such that the agents already picked by Alice and Bob all have efficiency 0 in the tasks attaining the value $5^k$. Note that, for all $i \in \interval{1}{k}$, there are at most two free agents whose maximum efficiency is exactly $5^i$. Moreover, we have $ 5^k > 5^k - 5 = 2 \sum^{k-1}_{i=1} 2{\cdot}5^i.$
Therefore:
\begin{itemize}
\item If there is exactly one agent with maximum efficiency $5^k$, then that agent is a dominating agent, so by Lemma \ref{dominating agent} it is optimal for the next player to pick that agent. This will be the case for the agents $A_1, B_1, \Gamma_j, \Gamma'_j, T_i^A, T_i^B$.
\item If there are exactly two agents with maximum efficiency $5^k$, then they form a pair of dominating agents (indeed, when one of these agents is picked, the other becomes a dominating agent in the resulting position), so by Lemma \ref{lemma: paired forcing move} it is optimal for the next player to pick one of the two and for his opponent to immediately pick the other. This will be the case for the pairs of agents $\{X_i, \overline{X_i}\}$, $\{X_{i,1}, \overline{X_{i,1}}\}$, $\{X_{i,2}, \overline{X_{i,2}}\}$, $\{Y_i, \overline{Y_i}\}$, $\{Y_i', \overline{Y_i'}\}$, $\{Y_{i,1}, \overline{Y_{i,1}}\}$, $\{Y_{i,2},\overline{Y_{i,2}}\}$.
\end{itemize}

This leads us to the following move order which is optimal:

\begin{enumerate}[label=(\Alph*)]
    \item Alice picks $A_1$.
    \item Bob picks $B_1$.
    \item For all $j \in \interval{1}{m}$ in increasing order:
    \begin{enumerate}[label=($j$.\arabic*)]
        \item Alice picks $\Gamma_j$.
        \item Bob picks $\Gamma_j'$.
    \end{enumerate} 
    \item For all $i \in \interval{1}{n}$ in increasing order:
    \begin{enumerate}[label=($i$.\arabic*)]
        \item Alice picks one of $X_{i}, \overline{X_{i}}$.
        \item Bob picks the available agent among $X_{i}$ and $ \overline{X_{i}}$.
        \item Alice picks one of $X_{i,1}, \overline{X_{i,1}}$.
        \item Bob picks the available agent among $X_{i,1}$ and $ \overline{X_{i,1}}$.
        \item Alice picks one of $X_{i,2}, \overline{X_{i,2}}$.
        \item Bob picks the available agent among $X_{i,2}$ and $ \overline{X_{i,2}}$.
        \item Alice picks $T^A_{i}$.
        \item Bob picks one of $Y_{i}, \overline{Y_{i}}$.
        \item Alice picks the available agent among $Y_{i}$ and $ \overline{Y_{i}}$.    
        \item Bob picks one of $Y'_{i}, \overline{Y'_{i}}$.
        \item Alice picks the available agent among $Y'_{i}$ and $ \overline{Y'_{i}}$.
        \item Bob picks $T^B_{i}$.
        \item Alice picks one of $Y_{i,1}, \overline{Y_{i,1}}$.
        \item Bob picks the available agent among $Y_{i,1}$ and $ \overline{Y_{i,1}}$.
        \item Alice picks one of $Y_{i,2}, \overline{Y_{i,2}}$.
        \item Bob picks the available agent among $Y_{i,2}$ and $ \overline{Y_{i,2}}$.
    \end{enumerate}
\end{enumerate}

Note that all the moves until move $C.m.2$ are forced moves. Moreover, during each step $C.j.2$ for $j \in \interval{1}{m}$, Bob can already assign Agent $\Gamma'_j$ to Task $S_j$ since Agent $B_1$ already takes care of Task $B$. Therefore, after Steps A through C, the provisional score is $\alpha - (\beta +m)$, and strategies need only be detailed during Step D. 

Set $s = \alpha - \beta$. We prove that $\score(\game) \geq s$ if and only if Satisfier has a winning strategy for the QBF game on $\psi$.

\subsection*{Proof of sufficiency}

Suppose first that Satisfier has a winning strategy $\strat$ for the QBF game on $\psi$. We consider the following strategy for Alice:

\begin{itemize}
    \item If $\strat$ dictates to put $x_i$ to True, she picks $X_i, X_{i,1}$ and $X_{i,2}$. Otherwise, she picks $\overline{X_i}, \overline{X_{i,1}}$ and $\overline{X_{i,2}}$. Note that all three picks force Bob's answer, so this is always possible.
    \item If Bob picks at least one of $Y_i$ or $Y'_i$, then Alice considers in $\strat$ that Falsifier has put $y_i$ to False. Otherwise, he has picked both $\overline{Y_i}$ and $\overline{Y_i'}$, and she considers that Falsifier has put $y_i$ to True. Next, if Bob has picked $Y_i$, then she picks $Y_{i,1}$, otherwise she picks $\overline{Y_{i,1}}$. Finally, if Bob has picked $Y_i'$, then she picks $Y_{i,2}$, otherwise she picks $\overline{Y_{i,2}}$.
\end{itemize}

Following this strategy, in Alice's choice gadget, both Alice and Bob will score $a_i$ in either Task $U_i$ or Task $\overline{U_i}$ and will not score in the other. In Bob's choice gadget, both Alice and Bob will score $d_i$ in either Task $V_i$ or Task $\overline{V_i}$ and will not score in the other, moreover both Alice and Bob will score $e_i$ in either Task $W_i$ or Task $\overline{W_i}$ and will not score in the other.

Therefore, during Step $D$, only Alice can score points, as Bob had already scored 1 in all the tasks $S_1,\ldots,S_m$ thanks to the setup gadget. We now prove that Alice scores 1 in each of the tasks $S_1,\ldots,S_m$.

Let $j \in \interval{1}{m}$. Since $\strat$ is a winning strategy for Satisfier, there exists a variable literal $\ell_i$ in the clause $C_j$ that is satisfied.

\begin{itemize}
    \item If $\ell_i = x_i$ then, by definition of Alice's strategy, she has picked $X_i, X_{i,1}$ and $X_{i,2}$. 
    Since Agent $X_i$ should obviously be assigned to Task $U_i$ (it is its only nonzero efficiency, and it is higher than the efficiencies of $X_{i,1}$ and $X_{i,2}$), whichever of Agents $X_{i,1}$ or $X_{i,2}$ has efficiency 1 for Task $S_j$ should be assigned to it.
    \item If $\ell_i = \neg x_i$ then, by definition of Alice's strategy, she has picked $\overline{X_i}, \overline{X_{i,1}}$ and $\overline{X_{i,2}}$. 
    Since Agent $\overline{X_i}$ should obviously be assigned to Task $\overline{U_i}$ (it is its only nonzero efficiency, and it is higher than the efficiencies of $\overline{X_{i,1}}$ and $\overline{X_{i,2}}$), she should assign Agent $\overline{X_{i,1}}$ to Task $S_j$ in which it has efficiency 1.
    \item If $\ell_i = y_i$, then Alice has considered that Falsifier has put $y_i$ to True in $\psi$. This only happens when Bob has picked both $\overline{Y_i}$ and $\overline{Y_i'}$. Thus, Alice has picked $Y_i, Y_i', Y_{i,1}$ and $Y_{i,2}$.
    Since Agents $Y_i$ and $Y'_i$ should obviously be assigned to Tasks $V_i$ and $W_i$ respectively, Alice should assign whichever of Agents $Y_{i,1}$ and $Y_{i,2}$ has efficiency 1 for Task $S_j$ to that task.
    \item If $\ell_i = \neg y_i$, then Alice has considered that Falsifier has put $y_i$ to False in $\psi$. Thus, Bob has picked at least one of $Y_i$ or $Y'_i$. If Bob has picked $Y_i$, then Alice has picked $\overline{Y_i}$ and $\overline{Y_{i,1}}$: since $\overline{Y_i}$ should obviously be assigned to Task $\overline{V_i}$, $\overline{Y_{i,1}}$ should be assigned to Task $S_j$ in which it has efficiency 1.
    If Bob has picked $Y'_i$, then Alice has picked $\overline{Y'_i}$ and $\overline{Y_{i,2}}$: since $\overline{Y'_i}$ should obviously be assigned to Task $\overline{W_i}$, $\overline{Y_{i,2}}$ should be assigned to Task $S_j$ in which it has efficiency 1.
\end{itemize}

We see that, in all cases, Alice has also scored 1 in each of the tasks $S_1, \dots S_m$. Therefore, the final score is $\alpha - \beta = s$.

\subsection*{Proof of necessity}

Suppose first that Falsifier has a winning strategy $\strat$ for the QBF game on $\psi$. We consider the following strategy for Bob:

\begin{itemize}
    \item If Alice picks at least two agents in $\{X_i, X_{i,1}, X_{i,2}\}$, then Bob considers for $\strat$ that Satisfier has put $x_i$ to True. Otherwise, he considers that Satisfier has put $x_i$ to False.
    \item If $\strat$ dictates to put $y_i$ to True, then Bob picks $\overline{Y_i}$ and $\overline{Y_i'}$. Otherwise, he picks $Y_i$ and $Y_i'$. 
\end{itemize}

Remark $(\ast)$: The first time Alice picks an agent with a nonzero efficiency in Task $U_i, \overline{U_i}, V_i, \overline{V_i}, W_i,\overline{W_i}$ respectively, we can assume that she assigns that agent to that task, and that the agent picked by Bob as an answer is assigned to Task $\overline{U_i}, U_i, \overline{V_i}, V_i, \overline{W_i},W_i$ respectively.

Indeed, this is immediate for Bob, since he had already scored 1 in each of the tasks $S_1,\ldots,S_m$ earlier thanks to the setup gadget. As for Alice, this comes as a consequence: indeed, if she does not assign her agent to Task $U_i, \overline{U_i}, V_i, \overline{V_i}, W_i,\overline{W_i}$ respectively, then she will score at least five times less points than Bob will in Task $\overline{U_i}, U_i, \overline{V_i}, V_i, \overline{W_i},W_i$ respectively, and the sacrifice is not worth scoring 1 in some Task $S_j$.

Since $\strat$ is a winning strategy for Falsifier, $\varphi$ is not satisfied once all agents are picked. Therefore, there exists $j \in \interval{1}{m}$ such that no literal of the clause $C_j$ is satisfied. We now show that Alice does not score at all in Task $S_j$. Let $\ell_i$ be a literal of $C_j$.

\begin{itemize}
    \item If $\ell_i = x_i$, then Bob has considered that Satisfier has put $x_i$ to False. This only happens when Alice has picked at least two agents in $\{\overline{X_i}, \overline{X_{i,1}}, \overline{X_{i,2}}\}$. Note that all three of these agents have efficiency 0 for Task $S_j$. In particular, if Alice has picked all three, then she cannot score in Task $S_j$ during this step. Therefore, suppose that she has picked exactly two of them, i.e. she has picked exactly one agent in $\{X_i, X_{i,1}, X_{i,2}\}$. 
    By Remark $(\ast)$, Alice will necessarily assign that agent to Task $U_i$. Thus, she does not score in task $S_j$ during Step D.$i$.
    \item If $\ell_i = \neg x_i$, then Bob has considered that Satisfier has put $x_i$ to True. This only happens when Alice has picked at least two agents in $\{X_i, X_{i,1}, X_{i,2}\}$. Again, note that all three of these agents have efficiency 0 for Task $S_j$. In particular, if Alice has picked all three, then she cannot score in Task $S_j$ during this step. Therefore, suppose that she has picked exactly two of them, i.e. she has picked exactly one agent in $\{\overline{X_i}, \overline{X_{i,1}}, \overline{X_{i,2}}\}$. By Remark $(\ast)$, Alice will necessarily assign that agent to Task $\overline{U_i}$. Thus, she does not score in task $S_j$ during Step D.$i$.
    \item If $\ell_i = y_i$ then, by definition of Bob's strategy, he has picked $Y_i$ and $Y_i'$, and Alice has picked $\overline{Y_i}$ and $\overline{Y_i'}$.
    Note that Agents $\overline{Y_{i,1}}$ and $\overline{Y_{i,2}}$ have efficiency 0 for Task $S_j$. Moreover, if Alice has picked $Y_{i,1}$ (resp $Y_{i,2}$), then, by Remark $(\ast)$, she must assign that agent to Task $V_i$ (resp. Task $W_i$). All in all, Alice does not score in Task $S_j$ during Step D.$i$.
    \item If $\ell_i = \neg y_i$ then, by definition of Bob's strategy, he has picked $\overline{Y_i}$ and $\overline{Y_i'}$, and Alice has picked . $Y_i$ and $Y_i'$.
    Note that Agents $Y_{i,1}$ and $Y_{i,2}$ have efficiency 0 for Task $S_j$. Moreover, if Alice has picked $\overline{Y_{i,1}}$ (resp $\overline{Y_{i,2}}$), then, by Remark $(\ast)$, she must assign that agent to Task $\overline{V_i}$ (resp. Task $\overline{W_i}$). All in all, Alice does not score in Task $S_j$ during Step D.$i$.
\end{itemize}

In conclusion, Alice does not score in Task $S_j$. Additionally, by Remark $(\ast)$, Bob matches Alice's score in the union of the tasks $U_i, \overline{U_i}, V_i, \overline{V_i}, W_i,\overline{W_i}$. Therefore, Alice scores at most $m-1$ more points than Bob during Step D, so $\score(\game) \leq \alpha - (\beta +m) + (m-1) = \alpha - \beta - 1 < s$.
 \end{proof}

\section{Efficient algorithms for OTPs}\label{section4}

Since \draft~is \PSPACE-complete for TTPs by Theorem \ref{theo_pspace}, this section is dedicated to the algorithmic study of OTPs. A first remark is that maximal efficiencies agents are always preferable, hence the following lemma as a direct corollary of Lemma~\ref{lemma dominating agents}.

\begin{lemma} \label{OTP_maximal}
    Let $\pos$ be a position of the draft game. There is always an optimal move for the next player which consists in picking an agent which has the largest efficiency of all free agents in its associated task. 
\end{lemma}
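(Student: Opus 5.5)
Since the lemma concerns OTPs, we read it with every free agent an OTP. For each free agent $Y$, let $i(Y)$ denote its unique task with nonzero efficiency (if $Y$ is identically zero, pick any task). The plan is to show that every free agent is dominated, in the sense of Lemma~\ref{lemma dominating agents}, by some free agent that has the largest efficiency in its own task. Combined with the existence of an optimal move, this gives the claim.

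The argument proceeds in three steps.

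\textbf{Step 1.} Start from an optimal move $Y$ for the next player. Such a move exists because the game is finite, as long as at least one free agent remains.

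\textbf{Step 2.} Let $i=i(Y)$ and let $y$ be the efficiency of $Y$ in task $i$. Choose a free agent $X$ whose efficiency $x$ in task $i$ is maximal among all free agents. Then $x\ge y$.

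\textbf{Step 3.} Compare $X$ and $Y$ coordinatewise. In coordinate $i$ we have $x\ge y$, and in every other coordinate $Y$ has efficiency $0$ while $X$ has nonnegative efficiencies. So $X$ dominates $Y$. Lemma~\ref{lemma dominating agents} is stated with a threshold $s$; we apply it with $s$ equal to the optimal value guaranteed by $Y$. This shows that picking $X$ is also optimal, for Alice and for Bob alike.

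It remains to check that $X$ has the largest efficiency of all free agents in \emph{its associated task}. Take the nondegenerate case $x>0$. Then task $i$ is the only task where $X$ has nonzero efficiency, so $i$ is its associated task, and $x$ is maximal there by the choice of $X$.

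If instead $x=0$, then every free agent has efficiency $0$ in task $i$, and in particular $Y$ is the zero vector. In that case we replace $Y$ by an arbitrary free agent $Z$ of maximum efficiency in some task: every agent dominates the zero vector. We then let $i(Z)$ be the task achieving $Z$'s nonzero value, or an arbitrary task if all free agents are zero. In the all-zero case, every free agent trivially has the largest efficiency, namely $0$, in its task.

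The only delicate point is this bookkeeping of the ``associated task'' for zero agents and ties. The core of the proof is the single application of Lemma~\ref{lemma dominating agents}, so I expect no real obstacle.
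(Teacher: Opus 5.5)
Your proof is correct and follows the paper's approach: the paper states this lemma as a direct corollary of Lemma~\ref{lemma dominating agents}, since an OTP is coordinatewise dominated by any free agent of maximal efficiency in its task, and you spell out that domination and the degenerate zero-agent cases explicitly. One small point: Lemma~\ref{lemma dominating agents} is phrased with an integer threshold $s$, but its proof never uses integrality, so taking $s$ equal to the (possibly non-integer) optimal value is fine.
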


We can then simplify some positions of the game, as stated in the following lemma.

\begin{lemma} \label{lemma: OTP alice and bob have an agent on the same task.}
    Let $\pos$ be a position that can be reached under optimal play, in which Alice and Bob have both already picked an agent with a nonzero efficiency in the same task $T$. Then the optimal score would be unchanged if we removed from $\pos$ all the free agents with a nonzero efficiency in $T$.
\end{lemma}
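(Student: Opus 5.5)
The plan is to exploit the OTP structure: every agent contributes only to its unique task, so a player's assignment value is $\sum_T$ (her best efficiency in $T$). The score therefore decomposes as a sum over tasks of $\max_{\cA}(T)-\max_{\cB}(T)$, where $\max_{\cA}(T)$ is the largest efficiency in $T$ among Alice's agents (or $0$ if she has none). This decomposition holds because OTPs with nonzero efficiency in distinct tasks never compete, and agents with zero everywhere contribute nothing.

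Using it, I will show that free agents in $T$ are, for the rest of the game, essentially equivalent to agents with all efficiencies $0$. That is, the position $\pos'$ obtained by replacing each free agent of $T$ by a null agent has the same optimal score as $\pos$. Since null agents are dominated by every agent (Lemma \ref{lemma dominating agents}) and only serve as tempo-free passes, removing them should not change the value. That last step needs care, so I will address it separately.

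\textbf{Step 1: the reachability hypothesis.} Let $a$ and $b$ be the best efficiencies in $T$ already held by Alice and Bob. I would first use that $\pos$ is reachable under optimal play, via Lemma \ref{OTP_maximal}, to argue that the first agent in $T$ picked by each player was the largest in $T$ at the time. Hence every free agent in $T$ has efficiency at most $\min(a,b)$. Strictly, Lemma \ref{OTP_maximal} only says \emph{some} optimal move is maximal, so I would read ``reached under optimal play'' as play following the maximal-agent optimal moves. Once every free agent in $T$ has efficiency at most $\min(a,b)$, those agents can no longer change $\max_{\cA}(T)$ or $\max_{\cB}(T)$, so the $T$-term of the score is frozen at $a-b$, whoever takes them.

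\textbf{Step 2: removing the dead agents.} In $\pos$ the dead agents only act as ``pass'' moves. I will show that $\score(\pos)$ equals the score of the position with these agents deleted, taking care of parity, which is the main obstacle. My approach is a strategy-copying argument in both directions, in the style of the proof of Lemma \ref{milnor}. If a player picks a dead agent in $\pos$, that is a pass. Showing passes never help relies on the nonzugzwang property (Lemma \ref{milnor}), applied to subpositions: the player to move is never worse off than if it were the opponent's turn. Concretely, if the opponent passes by taking a dead agent, the responder can also take a dead agent, mirroring it when an even number remain, or otherwise proceed and use nonzugzwang. When the number of dead agents is odd, mirroring fails for the last one. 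There I will instead argue by induction on the number of free agents, comparing positions via Lemma \ref{lemma dominating agents}: any live agent dominates a dead one as a move, so a player never needs to take a dead agent while live ones remain. Once only dead agents remain, the score is fixed. Hence under optimal play the dead agents are all taken at the end, and they can be deleted without affecting the value.
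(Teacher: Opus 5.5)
Your proposal is correct. Your Step 1 is essentially the whole of the paper's proof: by Lemma~\ref{OTP_maximal}, the agents of $T$ are picked in decreasing order, so the remaining ones can never be assigned and never contribute. The paper then concludes immediately that deleting them leaves the optimal score unchanged. It does not address the fact that deletion changes the number of free agents, and hence the turn order. Your Step 2 supplies exactly this missing tempo argument. Your reading of ``reached under optimal play'' is also more careful than the paper's claim that only the top agent of a task can be an optimal pick.

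Of the ideas you list in Step 2, the dominance induction suffices on its own, but run it on the position $\pos'$ in which the dead agents have been replaced by all-zero agents. Lemma~\ref{lemma dominating agents} requires coordinatewise domination, and a live agent of another task does not dominate an agent with nonzero efficiency in $T$ in that sense. With that replacement, the mirroring discussion and the odd/even case split can be dropped.

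Your nonzugzwang idea also gives a short alternative. Write $v_A(H)$ and $v_B(H)$ for the optimal score of a position $H$ with Alice or Bob to move, and let $N$ be one null agent. By induction on the number of free agents of $H$ one gets $v_A(H+N)=\max(v_B(H),v_A(H))=v_A(H)$ by Lemma~\ref{milnor}, and symmetrically $v_B(H+N)=\min(v_A(H),v_B(H))=v_B(H)$. Induction on the number of null agents then finishes the proof.
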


\begin{proof}
    First, by Lemma~\ref{OTP_maximal}, there is at most one agent per task that can be an optimal pick: the one with the highest efficiency. Moreover, once Alice and Bob have picked an agent each with a nonzero efficiency in some task $T$, all the other agents having a nonzero efficiency in Task $T$ have a smaller efficiency than the one already picked. Thus, they will not be assigned to this task in the end, and will not contribute to the score.
    Hence, removing them does not change the optimal score.
 \end{proof}

We describe two algorithms to compute the optimal score for OTPs: one is in linear time for the case where there are only two tasks, the other is in $\XP$~time parameterized by the number of tasks. 

\begin{theorem}\label{linear algorithm 2 tasks OTPS}
    There is a linear-time algorithm to compute the optimal score when all agents are OTPs and only two tasks are to be filled.
\end{theorem}

\begin{proof}
We first sort all the agents, according to the task in which they are OTPs. Up to adding $(0,0)$ agents, we assume that both tasks $T$ and $S$ have the same number of OTPs. Denote by $X_1 = (x_1,0), \dots, X_m = (x_m,0)$ the agents for Task $T$ and by $Y_1 = (0,y_1), \dots, Y_m = (0,y_m)$ the agents for Task $S$, such that $x_1 \ge x_2 \ge \dots \ge x_m$ and $y_1 \ge y_2 \ge \dots \ge y_m$. According to Lemma~\ref{OTP_maximal}, Alice will either pick $X_1$ or $Y_1$ on her first move.
By symmetry, computing the optimal score in linear time when Alice starts by picking $X_1$ is sufficient to prove the theorem. Therefore, we assume that Alice starts by picking $X_1$. Now, if Bob picks $X_2$, then all the other $T$-agents can be removed by Lemma~\ref{lemma: OTP alice and bob have an agent on the same task.}, so optimal play continues with Alice picking $Y_1$ and Bob picking $Y_2$, resulting in a score of $\alpha = x_1 - x_2 + y_1 - y_2$. Therefore, we assume that play starts with Alice picking $X_1$ and Bob picking $Y_1$.

For all $i \in \interval{1}{m}$, we define:

   $$B(i) = \left\{
    \begin{array}{ll}
         \min \left (\alpha, A(2) \right) & \quad \mbox{if } i = 1 \\
         x_1 - y_1 & \quad \mbox{if } i = m \\
        \min \left (x_1 - x_{i+1} + y_i - y_1, A(i+1) \right) & \quad \mbox{otherwise}
    \end{array}
\right.$$ and 
$$A(i)= \max\left (x_1 - x_i + y_i - y_1, B(i) \right ).$$

Note that the functions $A$ and $B$ can be computed in linear time in decreasing order of $i$, since the induction stops after at most $m$ steps and the complexity satisfies $C(i) = C(i+1) + O(1)$ where $C(m)$ is a constant.

We now claim that the optimal score is exactly $B(1)$. First note that the game ends one turn after the first time that Alice picks an $S$-agent or that Bob picks a $T$-agent since, by Lemma~\ref{lemma: OTP alice and bob have an agent on the same task.}, all the other $S$-agents (resp. $T$-agents) can then be removed. 

Consider Bob's $i$-th turn, with $i \in\interval{1}{m}$. If he picks $X_{i+1}$ 
, then necessarily Alice has picked every $X_j$ for $j\in\interval{1}{i}$ and Bob has picked every $Y_j$ for $j\in\interval{1}{i-1}$. Hence, by Lemma~\ref{lemma: OTP alice and bob have an agent on the same task.}, Alice will pick $Y_{i}$ and the score will be $x_1 - x_{i+1} + y_{i} - y_1$. If he picks $Y_i$ instead, then it is now Alice's turn and, as we are about to show, the score will be $A(i+1)$. Since Bob wants to minimize the score, the optimal score before Bob's turn is exactly the minimum among these two values i.e. $B(i)$. 

Consider Alice's $i$-th turn, with $i \in\interval{2}{m}$. If she picks $Y_i$, then necessarily Bob has picked all the $Y_j$s for  $j\in\interval{1}{i-1}$ and Alice has picked all the $X_j$s for $j\in\interval{1}{i-1}$, so Bob will pick $X_i$ and the score will be $x_1 - x_{i} + y_i - y_1$. If she picks $X_i$ instead, then it is now Bob's turn and and the score will be $B(i)$. Since Alice wants to maximize the score, the optimal score before Alice's turn is exactly the maximum among these two values i.e. $A(i)$.
 \end{proof}

\begin{theorem}\label{xp}
    There is an \XP\ algorithm, parameterized by the number of tasks $t$, that computes the optimal score in time $O(n^t)$ when all agents are OTPs, where $n$ is the number of agents.
\end{theorem}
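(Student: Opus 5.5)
We will compute the optimal score by dynamic programming over a compact description of the positions that can arise under optimal play. We sort the agents by task: for each task $T_k$, $k \in \interval{1}{t}$, let $Z^k_1, Z^k_2, \dots, Z^k_{n_k}$ be the OTPs for $T_k$ in nonincreasing order of efficiency $z^k_1 \ge z^k_2 \ge \dots$. Agents with all-zero efficiencies can be assigned to an arbitrary task, or handled as a separate pool that only affects parity. By Lemma~\ref{OTP_maximal}, we may restrict both players to strategies that always pick the largest free agent of some task. Under this restriction, the set of free agents for each task $T_k$ is always a suffix $\{Z^k_{p_k+1},\dots,Z^k_{n_k}\}$. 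The position is then determined by the vector $(p_1,\dots,p_t)$ together with, for each task, who owns which of the first $p_k$ agents.

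The key observation is that the ownership information compresses. For each task, only the best agent of each player matters, and by Lemma~\ref{lemma: OTP alice and bob have an agent on the same task.}, once both players own an agent in $T_k$, that task can be frozen: its contribution to the final score is fixed, and its remaining agents can be deleted. Each task is therefore in one of four states:
\begin{itemize}[noitemsep,nolistsep]
\item untouched ($p_k=0$);
\item only Alice has agents in it, in which case she owns $Z^k_1,\dots,Z^k_{p_k}$ and her value is $z^k_1$;
\item only Bob has agents in it, which is symmetric;
\item frozen, with contribution $z^k_1 - z^k_{j}$ or its negation, where $j$ is the index of the first agent taken by the other player.
\end{itemize}
The frozen contribution is already accumulated into the score, so it need not be stored in the state. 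The state is therefore $(p_1,\dots,p_t)$, a label in $\{\text{untouched},A,B,\text{frozen}\}^t$, and the number of remaining agents in frozen tasks, since those agents are still picked and matter for turn parity. In fact the player to move is determined by $\sum_k p_k$ plus the number of dead picks, so we can treat the dead agents as a free pool and let players "pass" into it.

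The value function $V(\text{state})$ is the optimal future score increment. It satisfies a min/max recursion over at most $t+1$ moves: take the top agent of a live task, or waste a move on a dead agent if any remain. Each move updates the state in $O(1)$, and the terminal value is $0$. The increment of a move is computed locally. For example, when Alice takes $Z^k_{p_k+1}$ in a task labelled $B$, the task freezes with contribution $z^k_{p_k+1}-z^k_1$. When a task becomes labelled $A$ for the first time, her gain $z^k_1$ is credited, while further picks by her in the same task add $0$ but advance $p_k$. The number of states is $\prod_k (n_k+1)\cdot 4^t \cdot O(n) = O(n^{t})$ up to factors in $t$ (using $\prod (n_k+1) \le (n/t+1)^t$), which gives the claimed \XP\ bound once memoization is used.

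The main obstacle is justifying that the restriction to top-agent picks and the state compression lose no optimality along the whole play, not just for a single move. Lemma~\ref{OTP_maximal} gives an optimal move among top agents in every position. A strategy that always plays such moves is therefore optimal against any opponent, so both players can be assumed to follow it, and every reached position has the suffix structure. Lemma~\ref{lemma: OTP alice and bob have an agent on the same task.} is stated for positions reachable under optimal play, which is exactly what our recursion visits. The subtle point is that deleting dead agents changes parity, so we will keep them as the passing pool above rather than removing them. We must also check that any pick in that pool is equivalent to any other, since no such agent can ever be assigned usefully.
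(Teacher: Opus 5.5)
Your recursion computes the right value and is essentially the paper's dynamic program: a four-valued label and a counter per task, with min/max over the top agents of the live tasks. However, it does not establish the running time claimed in the statement. The step $\prod_k (n_k+1)\cdot 4^t\cdot O(n)=O(n^t)$ is false. Since $\prod_k(n_k+1)$ can be of order $(n/t)^t$, your table has size $O(4^t n^{t+1}/t^t)$, i.e. $O(n^{t+1})$ for fixed $t$. This extra factor is not slack in the counting, because the pool counter really is an independent coordinate. Already for $t=2$, freeze task $1$ at index $j$, then let the players burn various numbers of the $n_1-j$ dead agents while task $2$ advances. This yields order $n^3$ distinct triples $(p_1,p_2,\text{pool})$. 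So as written you obtain an {\sf XP} algorithm, but in time $O(n^{t+1})$ rather than $O(n^t)$.

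The missing observation is that the pool can be dropped altogether. Your parity concern is legitimate, but the nonzugzwang property of Lemma~\ref{milnor} settles it. Let $D$ be a free agent that can never improve either player's assignment, and write $\pos+D$ for $\pos$ with $D$ added. Let $L$ and $R$ denote the optimal score with Alice, resp. Bob, to move. Picking $D$ amounts to passing, so induction on the number of free agents gives $L(\pos+D)=\max\bigl(L(\pos),R(\pos)\bigr)=L(\pos)$ and $R(\pos+D)=\min\bigl(R(\pos),L(\pos)\bigr)=R(\pos)$. Hence the leftover agents of frozen tasks, and the all-zero agents, can be deleted outright. This is how the paper applies Lemma~\ref{lemma: OTP alice and bob have an agent on the same task.}: closed tasks simply offer no further moves and there is no pass move. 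That leaves at most $2\prod_k 4n_k\le 2\cdot 4^t(n/t)^t$ states with $O(t)$ work each. If you prefer to avoid nonzugzwang, keep each frozen task's dead agents inside that task and advance $p_k$ when one of them is picked. The number of dead agents left is then a function of the $p_k$ and the labels. Also append the all-zero agents to the tail of some task instead of a separate pool. This removes the extra coordinate as well and gives $4^t\prod_k(n_k+1)=O(n^t)$ states.
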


\begin{proof}
By Lemma~\ref{OTP_maximal}, the optimal score of the game can be computed recursively as follow :
\begin{itemize}
\item In any position where no agent is left to pick, the optimal score of the game is exactly the sum of the best agent for each task for Alice, minus the sum of the best agent for each task for Bob.
\item In any position where Alice is to pick next and there is at least one agent to pick, the optimal score is the maximum score among the children positions for all possible choice of a highest efficiency agent for Alice.
\item In any position where Bob is to pick next and there is at least one agent to pick, the optimal score is the minimum score among the children positions for all possible choice of a highest efficiency agent for Bob.
\end{itemize}

We prove that the number of positions to consider is upperbounded by $O(n^t)$ when $t$ is bounded.

Let $t$ be the number of tasks. Suppose that for $1 \le j \le t$, only $n_j$ agents have a nonzero efficiency in task $j$. Hence, the total number of agents (up to remove the one with only zero efficiencies) is $n = n_1 + \dots  + n_t$. As agents are OTPs, we first sort according to each task and to their efficiency, which can be done in $O\big (t \cdot n \log(n)\big )$.
    
By Lemma~\ref{OTP_maximal}, we can suppose that any moment of the game, the agent that will be played is a free agent with highest efficiency in the task in which he has a nonzero efficiency. Let us remark that in a given position the score of each player can be computed as the sum of the best agent the picked for each task (or $0$ if they didn't pick any). Thus, to evaluate a position, there is no need to remember exactly the agents picked by the players, but only the best agent picked by each player for each task. Moreover, by Lemma~\ref{lemma: OTP alice and bob have an agent on the same task.}, if two players picked at least one agent for a given task, we can then without loss of generality consider that no further picks will be made for that task. Lastly, as the agent with a nonzero efficiency in a given task are always picked in decreasing order of their efficiency,  we only need to keep track of the number of agents that are left for this task.

We denote by $x_i^j$ the $i$-th efficiency of an agent for task $j$ in decreasing order.

Hence, for each task:
\begin{itemize}
\item We say that the task is {\em closed} if both players picked an agent with a nonzero efficiency in this task or if there are no agent with a nonzero efficiency left in it. It is then sufficient to know the identity of the player that picked first for this task, and the index of the agent that was taken by the other player ($0$ if said player did not pick any agent with a nonzero efficiency in this task).
\item Otherwise, we note the identity of the only player that may have played in that column, and the number of agents with nonzero efficiencies in this task that are left to pick.
\end{itemize}

Thus, there are $4$ possible situations for a task (closed with best agent picked by Alice / by Bob, not closed with only Alice / Bob picks) and one value between $0$ and $n_j$ to remember, resulting in at most $4(n_j+1)$ possible states for that task.
Identifying that $(A_c,n_j)$, $(B_c,n_j)$, $(A_c,0)$ and $(B_c,0)$ correspond to no possible position allows to reduce that number to $4 n_j$.
Thus, the number of distinct reduced positions is at most $2\prod_{j=1}^{t}4n_j\leq 2\cdot 4^t\left(\frac{n}{t}\right)^{t}$. 

Then we define a reduced position as an element of: 
$$\quickset{A,B} \times \prod_{j=1}^{t}\left(\quickset{A,B,A_c,B_c}\times\interval{0}{n_j}\right) $$

A reduced position $C=(t,((p_1,q_1), .. , (p_t,q_t)))$ represents the fact that it's the turn of player $t$ (Alice if $t=A$, Bob if $t=B$), and for the $j$-th task :
\begin{itemize}
\item If $p_j=A_c$ (resp $B_c$) then Alice (resp. Bob) picked the best agent for that task, and the best agent picked by Bob (resp. Alice) for that task is the $q_j$-th (the remaining agents to pick for that task are worse than the $q_j$-th).
\item If $p_j=A$ (resp $B$), then Alice (resp. Bob) picked the best agent for that task, Bob (resp. Alice) picked none for that task, and the remaining agents to pick for that task are the $q_j$ worst competent agents for that task.
\end{itemize}

Then, from a position whose reduce position is $C$, if the player $t$ picks the best agent for a task $j\in \interval{1}{t}$, by Lemma~\ref{lemma: OTP alice and bob have an agent on the same task.}, we can suppose $p_j \in \quickset{A,B}$, 
the resulting position has reduced position $C' =(t',((p_1',q_1'), .. , (p_t',q_t')))$ such that:
\begin{itemize}
\item $t'=A$ (resp. $B$) if $t=B$ (resp. $A$),
\item $\forall i \in \interval{1}{t}\setminus\quickset{j}, p_i'=p_i, q_i'=q_i$,
\item \begin{itemize}
	\item If $q_j=n_j$ this was the first agent with a nonzero efficiency picked for task $j$, and we set $p_j'=t$, $q_j'=n_j -1$ 
	\item Else, if $p_j=t$, only the player $t$ has picked agents with nonzero efficiencies in that task. We then set $p_j'=p_j$, and $q_j' = q_j-1$
	\item Otherwise, the player $t$ picks an agent with a nonzero efficiency in a task in which the other player also have an agent with a nonzero efficiency. Thus, we have $p_j = t'$ with $t' \in \{A,B\}$ and $t' \neq t$. If $t=A$ (resp. $B$), we set $p_j'=B_c$ (resp. $A_c$), and $q_j'=q_j$ 
	\end{itemize}
\end{itemize}

Then, the optimal score of $C$, $sc(C)$, which is the optimal score of every position whose reduced position is $C$, can be computed as follows:
\begin{itemize}
\item If for every $j\in\interval{1}{t}$, either $p_j \in \quickset{A_c,B_c}$ or $q_j=0$, then no move will change the best agent for each task for each player. Writing $f(A_c)=0$, $f(B_c)=1$, and $x^j_0 = 0$ we have $sc(C) = \sum_{j=1}^{t}(-1)^{f(p_j)}(x^j_{n_j}-x^j_{q_j})$,
\item Else, if $t=A$ (resp. $B$), $sc(C)$ is the maximum (resp. minimum) of the $sc(C')$ where $C'$ is taken among the resulting reduced positions after Alice (resp. Bob) picked the best agent for a task $j\in \interval{1}{t}$ for which there is at least one player that has picked no competent agent and there are competent agents left.
\end{itemize}

This computation considers all the possible optimal moves and therefore computes the optimal score of the game. By a dynamic programming approach, the total number of values to compute is the number of reduced positions, \textit{i.e.} at most $2 \left(\frac{4}{t}\right)^t n^t$, and each value can be computed in time $O(t)$ knowing the next values in the dynamic programming. Thus, the final complexity is proportional to $t \left(\frac{4}{t}\right)^t n^t$. 
 \end{proof}

\section{Conclusion}\label{section5}

We have introduced a two-player version of the assignment problem, thus initiating the mathematical study of drafts. For an arbitrarily large number of tasks, we have established that this problem is {\sf PSPACE}-complete, even restricted to agents that have at most two nonzero efficiencies. However, we do not know what the complexity is even when there are just two tasks in total. On the other hand, when all agents have at most one nonzero efficiency, we have an {\sf XP} algorithm parameterized by the number of tasks, but the complexity for an arbitrary number of tasks remains unknown: the different tasks surprisingly affect each other in a nontrivial way, but perhaps not sufficiently to allow for {\sf NP}-hardness reductions from what we suspect. Another open question would be to find an explicit strategy for Alice ensuring a nonnegative score, which we only know exists by a theoretical argument.

Some natural modifications could be made to our model. A first alternative could be to allow the value $-\infty$ for efficiencies, to account for the fact that some agents are unable to perform certain tasks (because they are not qualified, or simply because of their nature e.g. humans cannot perform the tasks of machines). Unassigned tasks should then count for $-\infty$ towards the score. It should be noted that the {\sf PSPACE}-completeness result, which is arguably the main result in this paper, still holds in this setting. Indeed, the optimal score of the game is clearly unchanged under replacing every efficiency 0 by $-\infty$ (to adapt the definition of a TTP agent) and then adding two agents per task with efficiency 0 in that task and $-\infty$ everywhere else (to ensure that both players ``score'' at least 0 in each task). A second alternative would be to ask for agents to be assigned a task as soon as they are picked, which would make sense for instance in the context of hiring people for jobs. Again, the {\sf PSPACE}-completeness result would still hold, since there is never a doubt as to which task an agent will eventually be assigned to in our proof.

We have defined the score as the difference between the two players' optimal assignment values, which corresponds to what we may call a {\em Maker-Maker} version of this scoring game. Instead, one may consider the {\em Maker-Breaker} version, where the score is simply defined as Alice's optimal assignment value. In other words, Alice wants to assemble the best team possible, while Bob is not building a team but is instead removing agents to impede Alice. This is a natural variant since it models the assignment problem in an environment where some agents may become unavailable over time, e.g. because they are recruited elsewhere. Instead of two entities competing against each other, this would be seen as a single entity (Alice) trying to protect against the worst-case scenario (incarnated by Bob). We note that the Maker-Breaker scoring game is a Maker-Breaker positional game with hyperedges of size equal to the number of tasks (corresponding to the sets of agents that attain the desired score), therefore the optimal score can be computed in polynomial time if there are three tasks or less \cite{makerbreaker3}.



%
%
%


\bibliographystyle{plain}
\bibliography{biblio}

\end{document}